\newtheorem{thm}{Theorem}[section]
\newtheorem{lem}[thm]{Lemma}
\newtheorem{defn}[thm]{Definition}
\newtheorem{rmk}[thm]{Remark}
\newtheorem{prop}{Proposition}
\title{Load balancing with heterogeneous schedulers}
\author{Urtzi Ayesta, Manu K. Gupta\footnote{Corresponding author: Manu K. Gupta (manu-kumar.gupta@irit.fr)}, and Ina Maria Verloop\\IRIT, 2 rue C. Camichel, Toulouse, France}
\begin{document}
\maketitle

\begin{abstract}
Load balancing is a common approach in web server farms or inventory routing problems. An important issue in such systems is to determine the server to which an incoming request should be routed to optimize a given performance criteria. In this paper, we assume the server's scheduling disciplines to be heterogeneous.  More precisely, a server implements a scheduling discipline which belongs to the class of limited processor sharing (LPS-$d$) scheduling disciplines. Under LPS-$d$, up to $d$ jobs can  be served simultaneously, and hence, includes as special cases First Come First Served ($d=1$) and Processor Sharing ($d=\infty$).

In order to obtain efficient heuristics, we model the above load-balancing framework as a multi-armed restless bandit problem. Using the relaxation technique, as first developed in the seminal work of Whittle, we derive Whittle's index policy for general cost functions and obtain a closed-form expression for Whittle's index in terms of the steady-state distribution.  
Through numerical computations, we investigate the performance of Whittle's index with two different 
performance criteria: linear cost criterion and a cost criterion that depends on the first and second moment of the 
throughput. Our results show that \emph{(i)} the structure of Whittle's index policy can strongly depend on the scheduling discipline implemented in the server, i.e., on $d$, and that \emph{(ii)} Whittle's index policy significantly outperforms standard dispatching rules such as 
Join the Shortest Queue (JSQ), Join the Shortest Expected Workload (JSEW), and Random Server allocation (RSA).
\end{abstract}

\textbf{keywords:} Queuing, load balancing, restless bandits, limited processor sharing, index policies.

\section{Introduction}
We address the problem of load balancing in a multi-server system with heterogeneous service disciplines. In such systems, a dispatcher seeks to balance the assignment of service requests (jobs) across many servers in the system in order to optimize the performance. Such models are ubiquitous in applications, for instance in clusters of web-server nodes, database systems, grid computing and inventory routing (see for example \cite{glazebrook2009index, gupta2007analysis, zhang2008web}). The literature on load-balancing is vast, with many papers published every year (see Section~\ref{related} for a brief summary), but a common assumption in most of the literature is that all the servers implement the same scheduling policy. In particular, the overwhelming majority of works assume that servers implement FCFS, even though Processor-Sharing (PS) has also received some attention. This is the motivation for our work, where we aim at studying, to the best of our knowledge for the first time, how a job dispatcher should operate in the presence of heterogeneous servers that might vary in terms of speed as well as in terms of scheduling discipline.

In order to do so, we formulate the above load-balancing problem as a discrete Markov decision process (MDP). 
New jobs arrive to the system according to a Bernoulli process of mean $p$, and upon arrival of a new job, the dispatcher  must decide to which server to dispatch the new job, with the objective of optimizing the long-run performance. 

Within a server, we assume that jobs are served according to the Limited Processor Sharing (LPS) discipline. In LPS-$d$, $d$ jobs (if present) are served simultaneously, that is, the server equally distributes its  attention to each of these $d$ jobs. First Come First Served (FCFS) and Processor Sharing (PS) disciplines are special cases of LPS-$d$ with $d=1$ and $d=\infty$, respectively. Thus, every server is characterized by the scheduling discipline it deploys, i.e., the  value of $d \in \mathbb{N}$, and its capacity is measured by $q \in (0,1]$, the averaged number of jobs that he can serve in a time unit. The evolution of every server is independent of each other, and their behavior is  coupled only through the policy that dispatches jobs among them. This allows us to cast the problem within the so-called Multi-Armed Restless Bandit problems (MARB), where a bandit models a server, a class of MDP's that has been widely studied in the literature for its multiple applications \cite{gittins2011multi}. The multi-armed restless bandit problem has been shown to be \textit{PSPACE-complete} \cite{papadimitriou1999complexity}, and closed-form solutions can only be obtained in toy examples.

We thus consider the relaxed version of the problem, as pioneered by Whittle in \cite{whittle1988restless}, in which \emph{on average} $p$ jobs are dispatched. Under the technical condition known as \emph{indexability}, Whittle showed that the solution to the relaxed problem is of index type: for every server there exists a function, the index,  that depends only on its own state, and the optimal policy dispatches to the servers with highest current index. Whittle then defined a heuristic for the original problem, referred to as Whittle's index policy, where in every decision epoch the bandit with highest Whittle index is selected. It has been shown that the Whittle's index
policy performs strikingly well, see \cite{nino2007dynamic} for a discussion, and
is asymptotically optimal under certain conditions, see \cite{verloop2016asymptotically} and \cite{weber1990index}.

Unfortunately, Whittle's index can not always be calculated in closed form. For instance, Borkar et al. \cite{borkar2017whittle} consider a similar problem to ours with $d=\infty$, and Whittle's index is calculated via an iterative scheme. In our main methodological contribution, we provide a closed-form expression for Whittle's index for a LPS-$d$ server,  which allows us to investigate the performance and properties of Whittle's index policy  as a function of server's heterogeneity, that is, the values of $d$ and $q$. In order to do so we consider two specific cost functions: \emph{(i)} a linear cost function, and \emph{(ii)} a cost criterion that depends on the first and second moment of the  throughput. A linear cost function has often been used as a cost criteria in the literature in the context of computer services (see \cite{mendelson1985pricing} for a discussion), whereas the weighted sum of linear cost and second order throughput captures the mean-variance trade-off (see \cite{singh2015index}).  Under the linear cost criterion, we observe numerically that $q$ is the key parameter characterizing the structure of Whittle's index, while the impact of $d$ on the dispatching decision is negligible.  Under the mean-variance trade-off cost criterion, we observe that Whittle's index policy strongly depends on the value of $d$. 
 In both cases, Whittle's index policy depends on the arrival probability $p$, which is key in making Whittle's index policy more efficient than dispatching rules that are oblivious to $p$, such as Join the Shortest Queue (JSQ), Join the Shortest Expected Workload (JSEW), and Random Server allocation (RSA). Our numerical results also indicate that Whittle's index policy is close-to-optimal under a wide range of parameters.

 
In summary, the main contributions of this paper are:
\begin{itemize}
\item We characterize in closed-form Whittle's index for a LPS-$d$ server and apply it to study the problem of dispatching jobs to heterogeneous servers.

\item Investigate the performance of Whittle's index policy as a function of the scheduling discipline of servers parameterized by $d$, the speed as measured by $q$, and the arrival probability given by $p$. 

\item Conclude that the impact of $d$ and $q$ vary depending on the objective. For  linear cost criterion, the speed $q$ is the main parameter while  $d$  does not have a big impact on the structure of Whittle's index policy. However, for a cost criterion that depends on the first and second moment, the dispatching policy strongly depends on both $d$ and $q$. 

\item Observe that Whittle's index policy can outperform known dispatching policies like JSQ (oblivious to $d$, $q$, and $p$) and JSEW (oblivious to $d$ and $p$). 
\end{itemize}

The remainder of the paper is organized as follows. Section~\ref{related} presents related work. Section \ref{sys_des} presents the load-balancing problem and explains our solution approach from a restless bandit perspective. Section \ref{sec:derivationWhittle} is dedicated to develop a closed-form expression for Whittle's index in terms of stationary probabilities. Section \ref{indices_for_different_discipline}  simplifies the expression in the case of  FCFS schedulers and provides insights on the index for the linear cost criterion. In Section \ref{scheduling_schemes}, we discuss the details of numerically finding an optimal policy and dispatching rules. We present our extensive numerical findings in Section \ref{numerics} for two different cost criteria. A concluding remark can be found in  Section \ref{sec:conclusion}.

\section{Related literature}
\label{related}
In this section we provide an overview of the most important related work  to the problem under consideration.

As mentioned in the introduction, literature on load balancing is vast and keeps growing motivated by application domains like cloud computing and parallel computing systems. 
A textbook covering load balancing is \cite[Chapter 24]{HB13}, and a  survey covering recent developments is \cite{BorstSurvey2018}. The large majority of papers consider FCFS implemented in servers, and a few works consider PS. As mentioned in the introduction, we are not aware of any work that analyzes how to dispatch jobs when servers deploy different scheduling disciplines. All the works mentioned in this overview assume FCFS, unless otherwise stated. A classical result shows that in case the dispatcher has precise information on the queue length in each  server,  and the servers are homogeneous, JSQ minimizes the total number of jobs in the system, see \cite{winston1977}. 
{Similarly, when the expected workload in each server is known to the dispatcher, JSEW has been shown to be optimal, see for example \cite{liu1998optimal, moyal2017pathwise}.} 
A very well-known policy is the so-called Power-of-$d$ policy, according to which upon the arrival of a new job, the dispatcher will probe $d$ servers, and dispatch the job to the shortest queue among these $d$ servers, see \cite{Mitz01}. In another stream of work, researchers have investigated the performance of systems in which idle servers send a notification back to the dispatcher, see \cite{fosssto17}. In size-based dispatching, the dispatching decision is based on the size of the incoming job, see for example \cite{Harhol-Vesilo10,FMD05}. With the so-called pull/push mechanism, servers that are idle can claim jobs from busy servers, and vice versa, servers with long queues can transfer jobs to idle servers, see \cite{Push17} for an analysis. In recent years, researchers have also investigated redundancy models, in which multiple copies of the same job are dispatched to a subset of servers, see \cite{Gardner17}. The optimal size-based dispatching policy with PS servers is characterized in \cite{AAP09}.


Restless bandits are a popular framework in various application domains, including inventory routing \cite{archibald2009indexability}, machine maintenance \cite{glazebrook2005index}, cloud computing \cite{borkar2017index}, sensor scheduling \cite{nino2011sensor}, etc. We refer to \cite{glazebrook2014stochastic} for a recent survey on the application of index policies in scheduling. Book length treatments of restless bandits can be found in \cite{jacko2010dynamic} and \cite{ruiz2008indexable}. 
In the particular area of load balancing, there has been several previous works that have applied the restless banding framework, and who have calculated Whittle's index  either in closed-form  or via an iterative scheme (see \cite{argon2009dynamic, borkar2017whittle, glazebrook2004index, nino2002dynamic, nino2012towards}). For example, in \cite{borkar2017whittle}, servers are PS and an iterative scheme is reported to compute Whittle's index in the case of linear cost functions and no blocking. The load balancing problem in \cite{argon2009dynamic, glazebrook2004index} deals with FCFS servers with dedicated arrivals for each queue. Dedicated arrivals  have priority over new arrivals in the load balancing model of \cite{glazebrook2004index}. In \cite[section 8.1]{nino2002dynamic} the author considers finite buffers and FCFS servers and determines an index policy. The model in \cite{nino2012towards} considers instead the  objective of minimizing the average job loss rate. 

An important modeling assumption we make is that time is discrete. In addition to the obvious mathematical tractability, from a practical point of view, it can be argued that decisions are often made in regular time moments instead of continuously.  Example application areas include web servers, distributed caching systems, large data stores and grid computing (see \cite{foster2008cloud, gupta2007analysis}). 
Discrete-time queues have been used to investigate the behavior of communication and computer systems in which time is slotted (see  \cite{artalejo2003performance, bruneel1994analysis, bruneel1993performance, laevens1998discrete}). For a comprehensive treatment of discrete time queues, we refer to recent books (see \cite{bruneel2012discrete, hunter2014mathematical}). The load balancing problem has been previously studied in discrete time for distributed systems in the presence of time delays (see \cite{dhakal2003dynamical}). 



The single server under the LPS-$d$ policy has been widely studied for the particular cases of $d=1$ (FCFS) and $d=\infty $ (PS). For $ 1< d <\infty$, analysis is scarce due to its complexity. Avi-Itzhak $\&$ Halfin \cite{LPSd1} propose an approximation for the mean response time assuming Poisson arrivals. A computational analysis based on matrix geometric methods is developed by Zhang $\&$ Lipsky \cite{zhang2007analytical, zhang2006modelling}. Some stochastic ordering results are derived in \cite{nuyens2009monotonicity}. Zhang, Dai $\&$ Zwart \cite{zhang2009law, zhang2011diffusion, zhang2008steady} develop fluid, diffusion and heavy traffic approximations. Gupta $\&$ Harchol-Balter \cite{gupta2009self} consider approximation methods and Markov decision techniques to determine the optimal level $d$ when the system is not work-conserving. The sojourn time tail asymptotics for the LPS-$d$ queue for both heavy-tailed and light-tailed job size distributions are recently explored in \cite{nair2010tail}. However, none of the prior work has focused on load balancing in a system with  LPS-$d$ servers, which is the central theme of this work.

\section{Model description}\label{sys_des}
 We consider a slotted-time model with decision epochs $t \in  \mathcal{T}:=\{0, 1, 2, . . . \}$. Time epoch $t$ corresponds to the beginning of time period $t$. 
Jobs arrive according to a Bernoulli arrival process  with arrival probability $p\in(0,1)$.
 New arrival must be dispatched to at most one of the $K$ servers, or must be blocked (see Figure \ref{fig:load_balancing}). Server~$k$ serves the jobs in his queue according to the LPS-$d_k$ service discipline (defined below), where $d_k$ is a parameter determining the scheduling discipline. Since  $d_i$ can be different from $d_j$, $i,j=1,\ldots, K$, this models heterogeneous scheduling disciplines. We assume that the servers are independent, and have capacity $q_k$.


\begin{figure}[ht]
\centering
\resizebox{.48\textwidth}{!}{\input{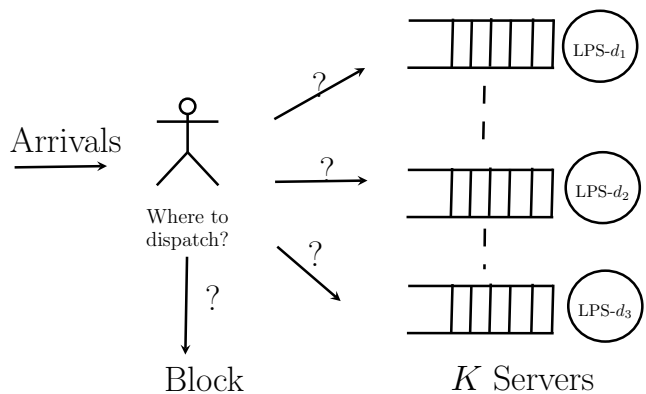}}
\caption{Abstraction of load balancing problem in a multi-server system with heterogeneous service disciplines.}\label{fig:load_balancing}
\end{figure}

Under the LPS-$d_k$ service discipline,  the first  $d_k$  jobs are served simultaneously and equally share the capacity of the server $k$. More precisely, the probability of departure in server~$k$ for the first $\min(d_k, n)$ jobs, where $n$ is the total number of jobs present in the queue of server $k$, is given by $q_k/\min(d_k, n)$.  
The departure process from server~$k$ will be a binomial process, with the following properties. If there is only one job in the server, the probability of it being completed in a given time slot
is $q_k$. However, if there are $m$ jobs in the server, by the egalitarian scheme, the probability of departure will be reduced to $q_k/m$ for each of the (up to $d_k$) jobs in the server.  The number of departures from the server will be a binomial random variable with departure probability $q_k/m$,
and the the mean number of departures in a time slot remains fixed at $q_k$. The departure process from server $k$ can jump down up to state $n-d_k$ in one step. The transition probability for $i$ departures in an LPS-$d_k$ system when there are in total $n$ jobs, is given by,  
\begin{equation}\label{LPSD_prob}
 q_k^{d_k}(i|n):= {\min\{n,d_k\} \choose i} \left(\frac{q_k}{\min\{n,d_k\}}\right)^i\left(1-\frac{q_k}{\min\{n,d_k\}}\right)^{\min\{n,d_k\}-i},
\end{equation}
for $i \le \min\{n,d_k\}$, and is equal to 0 for $i > \min\{n,d_k\} $.

It can be easily seen that FCFS and PS   are special cases of the LPS-$d_k$ scheduling scheme with $d_k=1$ and $\infty$, respectively. 
Under FCFS, a job in front of the queue is always served with full capacity, hence $d_k=1$. If the capacity of server~$k$ is $q_k$, the probability that a job in front of the queue will be completed in a given time slot is $q_k$. Hence, in one unit of time, the  process describing the number of jobs in the server can jump down at most by one.
Under PS, all jobs in the system are given the same processing power, hence $d_k=\infty$. Thus, there is a strictly positive probability that the process describing the number of jobs in the server goes in one time unit to zero, that is, all $n$ jobs present in the server depart.


A policy $\phi$ decides how new arriving jobs are dispatched. We focus on policies which base their decision only on the current states of the servers. For policy $\phi$, $N_k^\phi(t)$ denotes the state of server $k$ at time epoch $t$ and $\vec{N}^\phi(t) = (N_1^\phi(t),..., N_K^\phi(t) )$. 
Let $S_k^\phi(\vec{N}^\phi(t)) \in \{0,1\}$ represent whether or not an arrival is routed to server $k$ at time $t$ under policy $\phi$. Since a job can be dispatched to  at most one out of $K$ servers at each stage, we have  $\sum\limits_{k=1}^KS_k^\phi(\vec{N}) \le 1,$ which can be re-written as 
\begin{equation}\label{hard_constraint}
\sum\limits_{k=1}^K(1-S_k^\phi(\vec{N})) \ge K-1.
\end{equation}
If $\sum\limits_{k=1}^K  S_k^\phi(\vec{N})=0$, the job is blocked. 
The dynamics of the queue length process is then described as follows:
\begin{equation}\label{bandit_evolution}
N_k^\phi(t+1) = N_k^\phi(t)  +  S_k^\phi(\vec{N}^\phi(t))\psi(t+1)- R_k^{d_k}(t+1),
\end{equation}
	where $\psi(t)$ represents the number of jobs arriving in time period $t$ and $R_k^{d_k}(t)$ represents the process describing the number of jobs that departed in time period $t$  for server $k$ under the scheduling discipline LPS-${d_k}$. 
	
Let us denote by $\mathcal{U}$ the set of policies that satisfy the constraint (\ref{hard_constraint}) at each decision epoch and make the system ergodic. Throughout this paper, we assume that $p<\sum_{k=1}^K q_k$, which is the maximum stability condition. Hence,  under this assumption there exist feasible load balancing policies that make the system stable, i.e.,  $\mathcal{U} \neq \phi$.

\begin{rmk}\label{batch_arrival_extention}
In the model description, we assumed that at most one job can arrive per time period.  A natural generalization is to consider  uniformly bounded batch arrivals that need to be allocated to one server. We note here that the relaxation and decomposition technique that are developed in this paper will also go through in the batch setting, that is left out for the sake of tractability.
\end{rmk}

\subsection*{Stochastic optimal control}
For server $k$, let $C_k(n)$ denote the cost of being in state $n$. We assume that $C_k(n)$ is non-decreasing and bounded by a polynomial of finite degree.   
Let $D$ be the cost of blocking a job. 
The objective is to find a scheduling policy, $\phi \in\mathcal{U}$, that minimizes the long run average-cost criterion,
 \begin{equation}\label{objective}
\mathcal{C^\phi} :=\limsup_{T\to\infty}\frac{1}{T} \mathbb{E}\sum\limits_{t=0}^T\left(\sum\limits_{k=1}^K C_k(N_k^\phi(t)) + pD \left(1- \sum\limits_{k=1}^KS_k^\phi(\vec{N}^\phi(t))\right)\right).
\end{equation}
 Due to the hard constraint (\ref{hard_constraint}), $\sum\limits_{k=1}^KS_k^\phi(\vec{N}^\phi(t))$ is either zero (when blocked) or one (when routed). Since $\limsup\limits_{T\uparrow\infty}~\frac{1}{T} \mathbb{E}\sum\limits_{t=0}^T$ $\left(1- \sum\limits_{k=1}^KS_k^\phi(\vec{N}^\phi(t))\right)$ represents the fraction of time that an arriving job is blocked, the second term in (\ref{objective}) represents the average cost of blocking. For very large blocking cost, i.e., $D\rightarrow\infty$, in order to minimize~(\ref{objective}), one will take  $\sum\limits_{k=1}^KS_k^\phi(\vec{N}^\phi(t))$ equal to 1. Thus, when setting $D=\infty$, one retrieves the load balancing model without blocking.  
 
 The above problem can be seen as a particular case of Markov decision process (MDP)  and an exact dynamic programming formulation is possible (see \cite{puterman2014markov}). For certain MDPs, it is feasible to explicitly characterize an optimal stochastic control. An important class of such problems is the multi-armed bandit problem (MABP). In a MABP, only one bandit can be made active and only the active bandit can change state while the state of all other bandits remain \textit{frozen}. In MABP, an optimal solution has a simple structure, known as Gittin's index policy (see \cite{gittins2011multi}). In brief, there exists functions $G_k (n_k)$, depending only on the parameters of bandit $k$, such that an optimal policy in state $\vec{n}$ is to serve the bandit having currently the highest index $G_k(n_k)$. 

However, the optimal scheduling policies for restless bandit problems is typically out of reach. We note that the transition probability matrix of $N_k^\phi(t)$ is action dependent. In particular the state of the server can evolve both when the job is dispatched to the server or not. Hence, each of the servers can be considered as a restless bandit, and the load balancing problem can be seen as a restless bandit problem. The analytical solution of restless problems is inaccessible because of the sample path constraint (\ref{hard_constraint}). Even the
numerical resolution of this problem via dynamic programming becomes quickly intractable because of the curse of
dimensionality.  The restless bandit problem has been reported to be PSPACE complete (see \cite{papadimitriou1999complexity}). 

Therefore, we deploy the relaxation approach pioneered by Whittle in \cite{whittle1988restless}. The main idea is to relax the hard constraint (\ref{hard_constraint}) where the constraint has to be respected only in average but not at every decision epoch. Whittle showed that under the so-called \emph{indexability property}, the solution to the relaxed problem is fully characterized by Whittle's index policy, that allocates an incoming job to the schedulers with index larger than the Lagrange multiplier associated to the relaxation. Whittle then proposed a heuristic for the original problem with hard constraint, in which an incoming job is dispatched to the server with the highest index. The latter heuristic is nowadays referred to as Whittle's index policy. Whittle's index policy is in general not optimal for the problem with hard constraint, however is known to be asymptotically optimal as the number of bandits grows to infinity (see \cite{weber1990index}), and it has been reported that its performance is nearly-optimal for different problems.

In the next section, we will carry out the above research agenda, by first considering the relaxed problem, and then establishing indexability. We will then obtain in Equation (\ref{general_index_value}) the main result of the paper, i.e., a closed-form expression for the Whittle Index.


\section{Derivation of Whittle's index}
\label{sec:derivationWhittle}
In this section we derive Whittle's index. In Section~\ref{relaxation} we describe how the relaxation  decomposes the original $K$ dimensional optimization problem into  $K$ one-dimensional subproblems. In Section~\ref{sec:threshold} we show that the optimal solution to such a subproblem is of threshold type, which allows us to show in Section~\ref{sec:indexability} that the relaxed problem satisfies the  \emph{indexability} property. The latter justifies the derivation of Whittle's index, which is stated in Section~\ref{indices}. An optimal policy for the relaxed problem,  which is described by Whittle's index, will then serve as a heuristic for the original optimization problem, as described in Section~\ref{sec:index}.

\subsection{Relaxation and decomposition}
\label{relaxation}
In~\cite{whittle1988restless}, Whittle proposed to replace the infinite set of sample-path constraints~\eqref{hard_constraint} by its time-average version, that is, on average at most $K-1$ servers are kept passive:
\begin{equation}\label{relaxed_constraint}
\limsup_{T\to\infty}\frac{1}{T} \mathbb{E}\left(\sum_{t=0}^T\sum\limits_{k=1}^K(1-S_k^\phi(\vec{N}^\phi(t)))\right)\le K-1.
\end{equation}
 We denote by $\mathcal{U}^{REL}$ the set of stationary policies for which the Markov chain is ergodic and that satisfy (\ref{relaxed_constraint}). We note that $\mathcal{U} \subset \mathcal{U}^{REL}$.  The objective of the relaxed problem is hence to minimize (\ref{objective}) among all policies in $\mathcal{U}^{REL}$. 
 
 The relaxed problem can be solved by considering the following unconstrained problem: find a policy $\phi$ that minimizes
\begin{equation}\label{relaxed_problem}
\limsup_{T\uparrow\infty}\frac{1}{T}\mathbb{E}\left[\sum_{t=0}^T \left(\sum_{k=1}^K\right.\left[C_k(N_k^\phi(t))  + p D\left(\frac{1}{K} - {S_k^\phi(\vec{N}^\phi(t))}\right)\right]\right. \left.\left.+ W\left( K-1-\sum_{k=1}^K(1-S_k^\phi(\vec{N}^\phi(t))) \right)\right)\right],
\end{equation}
where $W$ is the Lagrange multiplier. The key observation made by Whittle is that the above relaxed problem can be decomposed into $K$ subproblems, one for each server $k$, that is, minimize  
\begin{equation}\label{sub_problem}
\mathcal{C}_k^{\phi_k}:=\limsup_{T\uparrow\infty}\frac{1}{T}\mathbb{E}\left[\sum_{t=0}^T \left(C_k(N_k^{\phi_k}(t)) -(W-pD)( (1-S_k^{\phi_k}({N}^{\phi_k}(t)))\right)\right].
\end{equation}

The solution to (\ref{relaxed_problem}) is then obtained by combining the solution to the  $K$ separate subproblems (\ref{sub_problem}). Under a stationarity assumption, we can invoke ergodicity to show that (\ref{sub_problem}) is equivalent to minimizing 
\begin{equation}\label{ergodic_sub_problem}
\mathbb{E}(C_k(N_k^{\phi_k}, S_k^{\phi_k}({N}_k^{\phi_k}))-\left(W-pD\right) \mathbb{E}(\mathbf{1}_{S_k^{\phi_k}({N}_k^{\phi_k})=0}),
\end{equation}
where $N_k^{\phi_k}$ is distributed as the stationary distribution of the state of server $k$ under policy $\phi_k$. 

\subsection{Threshold optimality }
\label{sec:threshold}

In this section, we establish that the optimal solution of problem (\ref{sub_problem}) is of threshold type. That is, there is a threshold $n_k(W)$ such that when there are $n_k$ users in server $k$, $n_k \le n_k(W)$, then accepting a job is optimal, and otherwise blocking  a job is optimal. We let policy $\phi_k = n_k$ denote a threshold policy with threshold $n_k$.
\begin{prop}\label{threshold_optimality}
Threshold policies are optimal for the relaxed load balancing problem, i.e., there exists an $n_k \in\{-1, 0,1, ...\}$ such that the threshold policy  $n_k$ optimally solves problem (\ref{ergodic_sub_problem}), for all $k=1,\ldots,K$.
\end{prop}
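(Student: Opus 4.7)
Cast problem~(\ref{sub_problem}) as a single-server long-run average-cost MDP on state space $\mathbb{N}$ with binary action $a \in \{0,1\}$ (accept/block), and prove threshold optimality by the standard convexity-in-state argument combined with a vanishing-discount passage from the discounted to the average-cost problem.

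Making the subproblem explicit, the per-step cost is $c(n, a) = C_k(n) - (W - pD)(1-a)$, and the transition kernel $P_a$ is obtained by composing a Bernoulli-$p$ arrival (accepted iff $a=1$) with the LPS-$d_k$ departure kernel $P^{\mathrm{dep}}$ encoded in~(\ref{LPSD_prob}). Since $C_k$ has polynomial growth and the always-block policy drains the server to state $0$ in finite expected time (so the average cost is finite under some stationary policy), standard Sennott-type conditions are met, and an optimal stationary average-cost policy can be obtained as the $\beta\uparrow 1$ limit of the $\beta$-discounted optimal policies. The discounted value function satisfies
$V_\beta(n) = C_k(n) + \min_{a\in\{0,1\}}\bigl\{-(W-pD)(1-a) + \beta \sum_{n'} P_a(n,n') V_\beta(n')\bigr\}.$

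I would then run value iteration and prove by induction on $t$ that the iterates $V_\beta^{(t)}$ are non-decreasing and \emph{discrete convex} in $n$, i.e., $\Delta V(n) := V(n+1)-V(n)$ is non-decreasing. Since $C_k$ is non-decreasing, and the minimum of two functions that are simultaneously non-decreasing and convex is again of that kind, the only non-trivial inductive step is that the LPS-$d_k$ departure kernel preserves discrete convexity: $\sum_{n'} P^{\mathrm{dep}}(n,n') V(n')$ is convex in $n$ whenever $V$ is. Granting this, a short computation gives
$Q_\beta(n,1) - Q_\beta(n,0) = (W-pD) + \beta p\bigl(\mathbb{E}[V_\beta(N_+)\mid n+1] - \mathbb{E}[V_\beta(N_+)\mid n]\bigr),$
where $N_+$ is the post-departure state. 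Convexity of $\mathbb{E}[V_\beta(N_+)\mid\cdot]$ forces this difference to be non-decreasing in $n$, so the $\beta$-optimal action switches at most once from accept to block as $n$ grows, yielding a threshold $n_k(\beta)\in\{-1,0,1,\ldots\}$ (with $-1$ meaning always block). A tightness/subsequence argument as $\beta\uparrow 1$ transfers the threshold structure to the average-cost problem~(\ref{ergodic_sub_problem}).

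The main obstacle I anticipate is proving convexity preservation under $P^{\mathrm{dep}}$: from~(\ref{LPSD_prob}) the number of departures from state $n$ is $\mathrm{Bin}(\min(n,d_k),\, q_k/\min(n,d_k))$, so both the number of trials and the success probability are state dependent, with a regime change at $n=d_k$. I would handle this by splitting into the ranges $n+2\le d_k$, $n+1=d_k$, and $n\ge d_k$, and in each range coupling the departure processes from levels $n$, $n+1$, $n+2$ on a common probability space, so that the discrete second difference $\sum_{n'}\bigl[P^{\mathrm{dep}}(n+2,n')-2P^{\mathrm{dep}}(n+1,n')+P^{\mathrm{dep}}(n,n')\bigr] V(n')$ can be rewritten as a non-negative combination of second differences of $V$, which is $\ge 0$ by the inductive convexity hypothesis. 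The FCFS case $d_k=1$ serves as a sanity check, since there the kernel is birth-death and the computation reduces to the textbook argument.
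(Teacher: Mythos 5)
Your proposal takes a genuinely different and much heavier route than the paper, and it hinges on a step you explicitly leave open. The paper's proof is a short structural argument that needs no dynamic programming at all: take any stationary optimal policy $\phi^*$ for~(\ref{ergodic_sub_problem}), let $n^*$ be the smallest state in which it blocks, and observe from the dynamics~(\ref{bandit_evolution}) that the queue length can increase by at most one per slot and only when a job is accepted; hence $\{0,\dots,n^*\}$ is a closed set under $\phi^*$, every state above $n^*$ is transient with zero stationary probability, and the stationary distribution and average cost of $\phi^*$ coincide with those of the threshold policy determined by $n^*$. Whatever $\phi^*$ prescribes above $n^*$ is irrelevant, so \emph{any} stationary optimal policy is automatically equivalent to a threshold policy on its recurrent class. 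Convexity of the value function never enters.

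By contrast, the load-bearing step of your argument --- that the LPS-$d_k$ departure kernel preserves discrete convexity --- is exactly the part you flag as ``the main obstacle'' and do not prove. This is a genuine gap rather than a routine verification: the number of departures from state $n$ is $\mathrm{Bin}\bigl(\min(n,d_k),\,q_k/\min(n,d_k)\bigr)$, so both the number of trials and the success probability vary with $n$ (the laws at adjacent levels share the mean $q_k$ but have different variances, with a regime change at $n=d_k$), and the quantity $\mathbb{E}[V(n-D_n)]$ couples the shift in $n$ with the change of law of $D_n$. A coupling expressing its second difference as a non-negative combination of second differences of $V$ is not available off the shelf, and without it the monotonicity of $Q_\beta(n,1)-Q_\beta(n,0)$, and hence the threshold structure, does not follow; the vanishing-discount passage additionally requires Sennott-type verifications (unbounded costs, countable state space) that the paper never needs. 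If you want to complete a proof along your lines you must supply the convexity-preservation lemma in full, but the paper's transience observation makes all of this machinery unnecessary for the statement at hand.
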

\begin{proof}
{
Let us drop the dependency on $k$ throughout the proof. Since there exists  $\phi \in \mathcal{U}_{REL}$, there exists a stationary optimal policy $\phi^*$ that optimally solves problem (\ref{ergodic_sub_problem}). Define $n^* = \inf\{ m\in\{ 0,1,...\}: S^{\phi^*}(m)=0 \}$. This implies $S^{\phi^*}(m)=~1,~\forall~m<n^*$ and $S^{\phi^*}(n^*)=0$. It follows from the evolution of the Markov chain (see Equation (\ref{bandit_evolution})) that all states $m > n^*$ are transient. Thus, $\pi^{\phi^*}(m) = 0~\forall~m>n^*$. Thus, the average cost as given by (\ref{ergodic_sub_problem}) under the optimal policy $\phi^*$ then reduces to 
\begin{eqnarray}\nonumber
\mathbb{E}(C(N^{\phi^*})-\left(W-pD\right) \mathbb{E}(\mathbf{1}_{S^{\phi^*}({N}^{\phi^*})=0})&=& \sum\limits_{m=0}^{n^*-1}C(m)\pi^{\phi^*}(m) + C(n^*)\pi^{\phi^*}(n^*) - \left(W-pD\right)\pi^{\phi^*}(n^*) \\\nonumber
&=&\mathbb{E}(C(N^{n^*}) - \left(W-pD\right)\pi^{n^*}(n^*),
\end{eqnarray}
that is, a threshold policy with threshold $n^*$ gives the optimal performance. 
 }
\end{proof}
The above proposition implies that an optimal policy for problem (\ref{ergodic_sub_problem}) is fully characterized by a threshold $n$. 
 We let $\pi_k^n(m)$ denote the steady state probability of being in state $m$ for bandit $k$ under the threshold policy $n$. Equation~\eqref{ergodic_sub_problem} under policy $\phi_k=n$ can be written as 
$$g_k^n(W):= \sum\limits_{m=0}^\infty C_k(m) \pi_k^n(m) - \left(W-pD\right)\sum\limits_{m=n+1}^\infty  \pi_k^n(m). $$
We hence conclude that the optimal solution of problem (\ref{ergodic_sub_problem}) is given by: 
$$g_k(W) = \min_{n}g_k^n(W).$$

\subsection{Indexability}
\label{sec:indexability}
 Indexability is the property that allows us to develop a heuristic for the original problem. This property requires to establish that as the Lagrange multiplier, or equivalently the subsidy for passivity, $W$, increases, the collection of states in which the optimal action is passive increases. It was first introduced by Whittle \cite{whittle1988restless} and we formalize it in the following definition.
\begin{defn}\label{indexability_defn}
A server is indexable if the set of states in which passive is an optimal action in (\ref{sub_problem}) (denoted by $D_k (W )$) increases in $W$, that is, $W' <W \Rightarrow D_k (W') \subseteq D_k(W )$.
\end{defn}
If indexability is satisfied, Whittle's index in state $N_k$ is defined as follows:
\begin{defn}
When server~$k$ is indexable, Whittle's index in state $N_k$ is defined as the smallest value for the subsidy such that actions active and passive are equally attractive in state $N_k$. The Whittle's index is denoted by $W_k(N_k)$.
\end{defn}

Given that the indexability property holds, Whittle established in \cite{whittle1988restless} that the solution to the relaxed control problem (\ref{relaxed_problem}) will be to activate all servers that are in state $n_k$ such that their
Whittle's index exceeds the Lagrange multiplier, i.e., $W_k (n_k) > W$. 

In order to prove indexability for our problem, we will make use of the following result 
 (see \ref{appendix:proofFCFS}  for its proof). 
 
 \begin{lem}\label{Lemma:stoch_dom}
$\sum\limits_{m=0}^n\pi_k^n(m)$ is non-decreasing in $n$. 
\end{lem}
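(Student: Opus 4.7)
The plan is to combine a flux identity with a stochastic ordering between the stationary distributions of two adjacent threshold chains.

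Step 1 (Flux identity). In steady state under threshold $n$, the expected number of arrivals admitted per slot equals the expected number of departures per slot. The LPS-$d_k$ discipline makes the expected departures in state $m$ equal to $q_k\mathbf{1}_{m\ge 1}$, since $\min(m,d_k)$ jobs are concurrently served, each with probability $q_k/\min(m,d_k)$, so the mean sums to $q_k$ whenever $m\ge 1$. Hence
\[
p\sum_{m=0}^n \pi_k^n(m) \;=\; q_k\bigl(1-\pi_k^n(0)\bigr),
\]
and proving that $\sum_{m=0}^n \pi_k^n(m)$ is non-decreasing in $n$ is equivalent to proving that $\pi_k^n(0)$ is non-increasing in $n$, which in turn follows from the stochastic dominance $N_k^{n+1}\ge_{st}N_k^n$ between the stationary queue lengths.

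Step 2 (Dominance). I would couple two copies $A$ and $B$ of the server under thresholds $n$ and $n+1$, sharing a common Bernoulli arrival stream and tagging each job in $B$ with an independent $U[0,1]$ variable used to couple departures system-wise: a job in service departs in a given system when its variable is below that system's per-job rate $q_k/\min(\cdot,d_k)$. Since the per-job rate is smaller in the more loaded system, jointly held jobs that depart in $B$ also depart in $A$; and by the threshold rule, $A$ admits an arrival only when $B$ does, the one moment of divergence being the state $N_A=N_B=n+1$, where $A$ blocks and $B$ admits, increasing the gap by one. For the FCFS case $d_k=1$, this coupling preserves the pointwise order $N_B(t)\ge N_A(t)$ and directly delivers the stationary dominance. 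For general LPS-$d_k$, a naive pointwise coupling can fail because of the state-dependence of the per-job rates, but the stationary dominance can nonetheless be recovered either by a refined coupling that tracks twinned and orphan jobs, or by invoking the LPS-$d$ monotonicity results of \cite{nuyens2009monotonicity}.

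Combining the dominance with Step 1 gives $\pi_k^{n+1}(0)\le\pi_k^n(0)$ and hence the claim. The main obstacle is Step 2 for general LPS-$d_k$: the per-job service probability itself depends on the occupancy and is smaller in the more loaded system, so the total departure counts are not directly orderable sample-pathwise, and the monotonicity must be extracted either by careful job-level accounting or by an appeal to the existing LPS-$d$ comparison literature.
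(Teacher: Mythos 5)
Your Step 1 is correct and is a genuinely different reduction from the one in the paper: the throughput balance $p\sum_{m=0}^{n}\pi_k^n(m)=q_k\bigl(1-\pi_k^n(0)\bigr)$ does hold, because the binomial departure law \eqref{LPSD_prob} has mean $q_k$ in every non-empty state and the chain under threshold $n$ has finite support, so the lemma is indeed equivalent to $\pi_k^n(0)$ being non-increasing in $n$. The paper instead uses the even more elementary identity $\sum_{m=0}^{n}\pi_k^n(m)=1-\pi_k^n(n+1)$ (all states above $n+1$ are transient under threshold $n$) and thereby reduces the lemma to the single comparison $\pi_k^{n+1}(n+2)\le\pi_k^{n}(n+1)$, i.e., to a comparison at the \emph{top} of the state space rather than at the bottom.

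The gap is in your Step 2, and you have located it yourself: for general LPS-$d_k$ the per-job departure probability $q_k/\min(m,d_k)$ decreases with the occupancy, the one-step departure counts from two ordered states are not stochastically ordered (e.g.\ $\mathrm{Bin}(2,q/2)$ can produce two departures while $\mathrm{Bin}(1,q)$ produces at most one), so the per-job pointwise coupling breaks, and the dominance $N^{n+1}\ge_{st}N^{n}$ that your argument needs is left unproved. Neither of your proposed repairs is carried out: the ``refined coupling that tracks twinned and orphan jobs'' is only named, and the monotonicity results of \cite{nuyens2009monotonicity} concern a continuous-time LPS queue and monotonicity in $d$, not monotonicity in the admission threshold of this discrete-time chain with binomial departures, so they cannot simply be invoked. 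Since for $d_k>1$ Step 2 is where all the work lies, the proof is incomplete as it stands. For comparison, the paper avoids sample-path coupling altogether: it relabels the states of the two threshold chains in reverse order, verifies the matrix criterion $P_1U\le P_2U$ for stochastic comparison of Markov chains (Kijima), and extracts exactly the one-point inequality its reduction requires. If you want to keep your flux-identity route, you would need an analogous kernel-level monotonicity argument (the post-departure state distributions \emph{are} plausibly stochastically ordered even though the departure counts are not) to deliver $\pi_k^{n+1}(0)\le\pi_k^{n}(0)$.
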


We can now prove indexability of the problem.  Some ideas in this proof are adopted from \cite{glazebrook2009index}.

\begin{prop} The load balancing problem is Whittle indexable. 
\end{prop}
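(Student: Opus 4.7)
The plan is to use the threshold structure from Proposition \ref{threshold_optimality} to reduce indexability to a monotonicity statement about the optimal threshold as a function of the subsidy $W$. Since threshold policies are optimal for the subproblem \eqref{ergodic_sub_problem}, the average cost under threshold $n \in \{-1,0,1,\ldots\}$ takes the affine-in-$W$ form
\[
g_k^n(W) = H_k^n - (W-pD)\,F_k^n,
\]
where $H_k^n := \sum_{m\ge 0} C_k(m)\pi_k^n(m)$ and $F_k^n := \sum_{m\ge n+1}\pi_k^n(m)$ is the stationary blocking probability under threshold $n$. So $W \mapsto g_k^n(W)$ is a line with slope $-F_k^n$ that is itself independent of $W$. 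Under the threshold-$n$ policy the passive action is chosen in exactly the states $\{n+1,n+2,\ldots\}$, so writing $n^*(W)$ for an optimal threshold we have $D_k(W) = \{n^*(W)+1, n^*(W)+2, \ldots\}$, and indexability reduces to showing that $n^*(W)$ can be chosen non-increasing in $W$.

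The key input is Lemma \ref{Lemma:stoch_dom}, which says $\sum_{m\le n}\pi_k^n(m)$ is non-decreasing in $n$, or equivalently that $F_k^n$ is non-increasing in $n$. For any $n<n'$ this gives
\[
g_k^n(W) - g_k^{n'}(W) = (H_k^n - H_k^{n'}) - (W-pD)\,(F_k^n - F_k^{n'}),
\]
an affine function of $W$ with slope $-(F_k^n-F_k^{n'})\le 0$, hence non-increasing in $W$. Thus if the smaller threshold $n$ is at least as cheap as the larger threshold $n'$ at some subsidy $W_1$, it remains at least as cheap at every $W_2 > W_1$. Choosing $n^*(W)$ to be the smallest optimal threshold at $W$, this weak monotonicity yields $n^*(W_2) \le n^*(W_1)$ whenever $W_1 < W_2$, and therefore $D_k(W_1) \subseteq D_k(W_2)$, which is precisely the indexability property of Definition \ref{indexability_defn}.

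The main obstacle I anticipate is the careful treatment of ties in the optimal threshold, since the monotonicity of $g_k^n - g_k^{n'}$ in $W$ is only weak and several thresholds can be simultaneously optimal at a crossover value of $W$. The standard way around this is a short contradiction argument: if one had $n^*(W_2) > n^*(W_1)$ for some $W_1 < W_2$, the non-increase of $g_k^{n^*(W_1)}(W)-g_k^{n^*(W_2)}(W)$ in $W$, combined with optimality of $n^*(W_1)$ at $W_1$, would force $n^*(W_1)$ to also be optimal at $W_2$, contradicting the definition of $n^*(W_2)$ as the smallest optimal threshold. Once this bookkeeping is settled, the rest of the argument is the direct calculation outlined above, and notably no closed-form knowledge of $\pi_k^n$ is required.
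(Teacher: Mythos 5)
Your proof is correct and follows essentially the same route as the paper: both rest on threshold optimality, the affine dependence of $g_k^n(W)$ on $W$ with slope $-F_k^n$, and Lemma \ref{Lemma:stoch_dom} giving monotonicity of that slope in $n$. The only difference is cosmetic --- the paper deduces monotonicity of the optimal threshold from concavity of the lower envelope $g(W)$ and its right derivative, whereas you make the equivalent pairwise crossing argument directly (and handle ties a bit more explicitly than the paper does).
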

\begin{proof}
{
Let us drop the dependency on $k$ throughout the proof for ease of notation. Since an optimal policy for (\ref{ergodic_sub_problem}) is of threshold type, for a given subsidy $W$ the optimal average cost under threshold $n$ will be $g(W) := \min\limits_{n}\{g^{(n)}(W)\} $, where 
\begin{eqnarray}\nonumber
g^n(W) &:=& \sum\limits_{m=0}^\infty C(m) \pi^n(m) -\left(W-pD\right)\sum\limits_{m=n+1}^\infty  \pi^n(m)\\\nonumber
&=&\sum\limits_{m=0}^\infty C(m)\pi^n(m)+\left(W-pD\right)\sum\limits_{m=0}^n  \pi^n(m)-\left(W-pD\right).
\end{eqnarray}
Let $n(W)$ be the minimizer of $g(W)$. Note that $g^{(n)}(W)$ is an affine non-increasing function of $W$.  Thus, the function $g(W)$ is a lower envelope  of affine non-increasing functions of $W$. It thus follows that $g(W)$ is a concave non-increasing function. 

It directly follows that the right derivative of $g(W)$ in $W$ is given by $\sum\limits_{m=0}^{n(W)}\pi^{n(W)}(m)-1$. Since $g(W)$ is concave in $W$, the right derivative is non-increasing in $W$. But $\sum\limits_{m=0}^n\pi^n(m)$ is non-decreasing in $n$, from Lemma \ref{Lemma:stoch_dom}. It hence follows that $n(W)$ is non-increasing in $W$. Together with  $D(W) = \{m:m\ge n(W) \}$ and by  Definition  \ref{indexability_defn}, indexability follows. 
}
\end{proof}

\subsection{Whittle's index}\label{indices}
Recall that Whittle's index is the smallest value of $W$ such that we are indifferent of the action taken in state $n$. Under the optimality of threshold policies, one is indifferent of the action taken in state $n$ if the performance under threshold policies $n -1$ and $n$ are equal.
The following proposition presents a closed form expression of Whittle's index for the load-balancing problem.

\begin{prop}\label{proposition:monotone_index}
 Assume $\sum\limits_{m=0}^n\pi_k^n(m)$ is strictly increasing in $n$.  The Whittle index is given by 
\begin{equation}\label{general_index_value}
W_k(n) = pD-\frac{ \sum\limits_{m=0}^{n}C_k(m)[\pi_k^{n}(m)-\pi_k^{n-1}(m)] + C_k(n+1)\pi_k^{n}(n+1) }{\sum\limits_{m=0}^n\pi_k^n(m) - \sum\limits_{m=0}^{n-1}\pi_k^{n-1}(m)},
\end{equation}
if $W_k(n)$ is non-increasing in $n$. 
\end{prop}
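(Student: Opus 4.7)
The plan is to exploit Proposition~\ref{threshold_optimality}: since an optimal policy for the subproblem~\eqref{ergodic_sub_problem} is of threshold form, Whittle's index in state $n$ can be computed as the value of the subsidy $W$ at which the two optimal candidates---threshold $n-1$ (which blocks at $n$) and threshold $n$ (which accepts at $n$)---yield the same long-run average cost. This identifies exactly the crossover $W$ at which, for bandit $k$, the active and passive actions are equally attractive in state $n$, which by definition is $W_k(n)$.

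First I would exploit the local structure of the Markov chain under a threshold policy. Under threshold $n$, arrivals are accepted in every state $m\le n$ and blocked in every state $m\ge n+1$; since the state increases only through (accepted) arrivals, the maximum state that is reachable from any initial condition is $n+1$, and every state above $n+1$ is transient. Consequently $\pi_k^{n}(m)=0$ for all $m\ge n+2$, and
\begin{equation*}
\sum_{m=n+1}^{\infty} \pi_k^{n}(m)=\pi_k^{n}(n+1)=1-\sum_{m=0}^{n}\pi_k^{n}(m).
\end{equation*}
An identical remark for the threshold-$(n-1)$ chain gives $\pi_k^{n-1}(m)=0$ for $m\ge n+1$ and $\pi_k^{n-1}(n)=1-\sum_{m=0}^{n-1}\pi_k^{n-1}(m)$. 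These two simplifications collapse the infinite sums appearing in $g_k^{n}(W)$ and $g_k^{n-1}(W)$ into finite sums up to $n+1$ and $n$ respectively.

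Next I would write out $g_k^{n-1}(W)=g_k^{n}(W)$ using the expression
$g_k^{n}(W)=\sum_{m=0}^{\infty}C_k(m)\pi_k^{n}(m)-(W-pD)\sum_{m=n+1}^{\infty}\pi_k^{n}(m)$
established just after Proposition~\ref{threshold_optimality}, substitute the simplifications above, and group terms. This yields an equation of the form
\begin{equation*}
-\sum_{m=0}^{n}C_k(m)\bigl[\pi_k^{n}(m)-\pi_k^{n-1}(m)\bigr]-C_k(n+1)\pi_k^{n}(n+1)
=(W-pD)\Bigl[\sum_{m=0}^{n}\pi_k^{n}(m)-\sum_{m=0}^{n-1}\pi_k^{n-1}(m)\Bigr],
\end{equation*}
after using $\pi_k^{n-1}(n)-\pi_k^{n}(n+1)=\sum_{m=0}^{n}\pi_k^{n}(m)-\sum_{m=0}^{n-1}\pi_k^{n-1}(m)$. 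The hypothesis that $\sum_{m=0}^{n}\pi_k^{n}(m)$ is strictly increasing in $n$ is what allows the division in the final step: it guarantees the denominator is strictly positive, so the crossover $W$ is uniquely defined and solving for it delivers formula~\eqref{general_index_value}.

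The last step is to argue that this crossover value is indeed Whittle's index in the sense of the definition (the \emph{smallest} $W$ making both actions equally attractive in state $n$), not merely some equalising value between two particular threshold policies. Here I would invoke the proof of indexability: the optimal threshold $n(W)$ is non-increasing in $W$, so as $W$ grows the passive set $D_k(W)$ expands one state at a time. The assumption that the candidate $W_k(n)$ given by~\eqref{general_index_value} is itself non-increasing in $n$ is exactly what ensures that the sequence of crossovers respects this monotone structure, so that the crossover between thresholds $n-1$ and $n$ is precisely the value of $W$ at which state $n$ enters $D_k(W)$. I expect this last consistency check---aligning the local crossover computation with the global indexability picture---to be the main subtlety, whereas the algebraic manipulation of the two $g_k^{n}$ expressions is routine once the finite-support observation for $\pi_k^{n}$ is in place.
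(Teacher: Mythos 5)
Your proposal is correct and follows essentially the same route as the paper: equate the stationary average costs $g_k^{n-1}(W)=g_k^n(W)$ of adjacent threshold policies, use the finite support of $\pi_k^n$ (so that $\pi_k^n(n+1)=1-\sum_{m=0}^n\pi_k^n(m)$) to collapse the sums, divide by the strictly positive denominator, and then invoke the non-increasing assumption on the candidate index together with the lower-envelope/concavity structure of $g(W)$ to identify the crossover with the Whittle index. The paper's proof is a slightly more compact version of exactly this argument, including your final consistency step via the monotonicity of $dg^{(n)}(W)/dW$ in $n$.
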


Note that we require  $\sum\limits_{m=0}^n\pi_k^n(m)$ to be strictly increasing in $n$. In Lemma~\ref{Lemma:stoch_dom} we proved that this function is non-decreasing. Numerical evidence shows that the function is in fact strictly increasing. 

\begin{proof} We drop the subscript $k$ throughout the proof. 
Let $\tilde{W}(n)$ be the value for the subsidy such that the average cost under threshold policy $n$ is
equal to that under policy $n-1$. Hence, using (\ref{ergodic_sub_problem}), we obtain $\mathbb{E}(C(N^n))-(\tilde{W}(n)-pD) \mathbb{E}(\mathbf{1}_{S^n({N}^n)=0}) = \mathbb{E}(C(N^{n-1}))-(\tilde{W}(n)-pD) \mathbb{E}(\mathbf{1}_{S^{n-1}({N}^{n-1})=0})$, for all $n\ge 1$. Together with  $\mathbb{E}(\mathbf{1}_{S^n({N}^n)=0})  = \sum\limits_{m=n+1}^\infty \pi^n(m) = 1-\sum\limits_{m=0}^n\pi^n(m)$, and the fact that $\sum\limits_{m=0}^n\pi_k^n(m)$ is strictly increasing in $n$, we obtain 
$$\tilde{W}(n) = {pD}-\frac{\mathbb{E}(C(N^n)) - \mathbb{E}(C(N^{n-1}))}{\sum\limits_{m=0}^n\pi^n(m) - \sum\limits_{m=0}^{n-1}\pi^{n-1}(m)}.$$
Below we will show that  $\tilde{W}(n)$ is non-increasing in $n$. This then  implies that $g(\tilde{W}(n)) = g^{(n)}(\tilde{W}(n)) = g^{(n-1)}(\tilde{W}(n))$. Now, recall that  $\frac{dg^{(n)}(W)}{dW} = - \sum\limits_{m=n+1}^\infty\pi^n(m)$ is non-decreasing in $n$. Hence, $g(W ) = g^{(n)}(W)$, for $\tilde{W}(n) \leq W \leq\tilde{W}(n - 1)$. This implies that Whittle's index is given by $W(n) = \tilde{W}(n)$.  Note that $\mathbb{E}(C(N^n))  = \sum\limits_{m=0}^{n+1}C(m)\pi^{n}(m)$, hence $\tilde{W}(n)$  simplifies to the Whittle's index as stated in the proposition.
\end{proof}

\begin{rmk}
For a load balancing problem with homogeneous PS servers,  an iterative scheme to approximate Whittle's  index was  reported in~\cite{borkar2017whittle}. The iterative scheme to compute Whittle's index in \cite{borkar2017whittle} adjusts the current guess for the index in the direction of decreasing the discrepancy in the active and passive values which should agree for the exact index. Further, a linear interpolation is used after computing the index for sufficiently large number of states which makes the indices approximate in nature in~\cite{borkar2017whittle}. 
\end{rmk}

Our approach results in a closed form expression for Whittle's index in terms of steady-state probabilities for LPS-$d$ schedulers. Though, we could not theoretically argue the non-increasing nature of the index~(\ref{general_index_value}),  we numerically note that this is indeed the case for a wide set of parameters and for different cost functions (see Figures \ref{pattern_indices}, \ref{Indices_pattern_sameq} and \ref{Indices_pattern_diffq}). We note that Whittle's index~\eqref{general_index_value} has a common term $pD$. In case no blocking is allowed, Whittle's index is obtained by simply dropping the term $pD$.


\subsection{Heuristic for load-balancing problem} \label{sec:index}
In this section, we describe how the optimal solution to the relaxed optimization problem is used to obtain a heuristic for the original model. The optimal solution to the relaxed problem, that is, activate all servers that are in a state $n_k$ such that $W_k(n_k) > W$, might be infeasible for the original model where a job can be dispatched to at most one server. Hence, Whittle \cite{whittle1988restless} proposed the following heuristic, which is referred as Whittle's index policy.

\begin{defn}[\textbf{Whittle's index policy}] Assume at time $t$ we are in state $\vec{N}(t) = \vec{n}$. 
\begin{itemize}
\item
In case $D<\infty$ and hence blocking is allowed, Whittle's index policy sends a new arriving job to the server having currently the \textit{highest} non-negative Whittle's index value $W_k (n_k )$. If all Whittle's indices are negative, the job is blocked. 
\item In case no blocking is allowed (and hence $D=\infty$), Whittle's index policy sends a new arriving job to the server having currently the \textit{highest} (possibly negative) Whittle's index value $W_k (n_k )$. \end{itemize}
\end{defn}

In case blocking is allowed, and  all servers are in a state such that their Whittle's index is negative, all servers are kept passive, i.e., the job is not dispatched to any of the servers and it is blocked. The latter is a direct consequence of the relaxed optimization problem: when the Whittle's index is negative for a server in state $\bar{n}$, this means that it is made active only if $W < W_k (\bar{n}) < 0$, that is, when a cost is paid for being passive.

\begin{rmk}
The closed form expressions of Whittle's index in Proposition \ref{proposition:monotone_index} facilitate a major computational saving on the computation of the true optimal policy.
\end{rmk}

\section{Computation of Whittle's index}\label{indices_for_different_discipline}

Whittle's index depends on the stationary distribution of the threshold policies, see Proposition \ref{proposition:monotone_index}. 
In this section we give further details on how these stationary probabilities can be calculated. In general, the closed form expression of the stationary distribution is  intractable. 
In Section~\ref{LPS-d}, we present  details of the transition probability matrix and how this can be used to  compute Whittle's index numerically. In Section~\ref{FCFS_scheme}, we focus on $d=1$ (FCFS), in which case a closed form expression for the steady state distribution, and hence for Whittle's index, can be given.

\subsection{LPS-$d$ discipline}\label{LPS-d}
In this section we describe the transitions of the one-dimensional problem of the relaxed optimization problem. Hence, we are given one single server, which implements LPS-$d_k$, and who accepts customers until threshold $n$. The one-step state transitions of this DTMC of server $k$ are shown in Figure \ref{fig:LPS_scheme} for  a state  $m<n$.
Recall that $q_k^{d_k}(i|m)$ was defined in~\eqref{LPSD_prob} and describes the probability of $i$ departures when $m$ jobs are in server $k$.   

\begin{figure}[ht]
\centering
\resizebox{.47\textwidth}{!}{
\ifx\du\undefined
  \newlength{\du}
\fi
\setlength{\du}{15\unitlength}
\begin{tikzpicture}
\pgftransformxscale{1.000000}
\pgftransformyscale{-1.000000}
\definecolor{dialinecolor}{rgb}{0.000000, 0.000000, 0.000000}
\pgfsetstrokecolor{dialinecolor}
\definecolor{dialinecolor}{rgb}{1.000000, 1.000000, 1.000000}
\pgfsetfillcolor{dialinecolor}
\definecolor{dialinecolor}{rgb}{1.000000, 1.000000, 1.000000}
\pgfsetfillcolor{dialinecolor}
\pgfpathellipse{\pgfpoint{23.071664\du}{10.204972\du}}{\pgfpoint{1.478364\du}{0\du}}{\pgfpoint{0\du}{1.301682\du}}
\pgfusepath{fill}
\pgfsetlinewidth{0.100000\du}
\pgfsetdash{}{0pt}
\pgfsetdash{}{0pt}
\pgfsetmiterjoin
\definecolor{dialinecolor}{rgb}{0.000000, 0.000000, 0.000000}
\pgfsetstrokecolor{dialinecolor}
\pgfpathellipse{\pgfpoint{23.071664\du}{10.204972\du}}{\pgfpoint{1.478364\du}{0\du}}{\pgfpoint{0\du}{1.301682\du}}
\pgfusepath{stroke}
\definecolor{dialinecolor}{rgb}{0.000000, 0.000000, 0.000000}
\pgfsetstrokecolor{dialinecolor}
\node at (23.071664\du,10.399972\du){\large $m$};
\definecolor{dialinecolor}{rgb}{1.000000, 1.000000, 1.000000}
\pgfsetfillcolor{dialinecolor}
\pgfpathellipse{\pgfpoint{17.993364\du}{10.063332\du}}{\pgfpoint{1.478364\du}{0\du}}{\pgfpoint{0\du}{1.301682\du}}
\pgfusepath{fill}
\pgfsetlinewidth{0.100000\du}
\pgfsetdash{}{0pt}
\pgfsetdash{}{0pt}
\pgfsetmiterjoin
\definecolor{dialinecolor}{rgb}{0.000000, 0.000000, 0.000000}
\pgfsetstrokecolor{dialinecolor}
\pgfpathellipse{\pgfpoint{17.993364\du}{10.063332\du}}{\pgfpoint{1.478364\du}{0\du}}{\pgfpoint{0\du}{1.301682\du}}
\pgfusepath{stroke}
\definecolor{dialinecolor}{rgb}{0.000000, 0.000000, 0.000000}
\pgfsetstrokecolor{dialinecolor}
\node at (17.993364\du,10.258332\du){\large $m-1$};
\pgfsetlinewidth{0.100000\du}
\pgfsetdash{}{0pt}
\pgfsetdash{}{0pt}
\pgfsetmiterjoin
\pgfsetbuttcap
{
\definecolor{dialinecolor}{rgb}{0.000000, 0.000000, 0.000000}
\pgfsetfillcolor{dialinecolor}
\pgfsetarrowsend{to}
\definecolor{dialinecolor}{rgb}{0.000000, 0.000000, 0.000000}
\pgfsetstrokecolor{dialinecolor}
\pgfpathmoveto{\pgfpoint{22.505900\du}{11.407600\du}}
\pgfpathcurveto{\pgfpoint{21.750000\du}{12.756700\du}}{\pgfpoint{19.400000\du}{13.056700\du}}{\pgfpoint{17.993400\du}{11.365000\du}}
\pgfusepath{stroke}
}
\pgfsetlinewidth{0.100000\du}
\pgfsetdash{}{0pt}
\pgfsetdash{}{0pt}
\pgfsetmiterjoin
\pgfsetbuttcap
{
\definecolor{dialinecolor}{rgb}{0.000000, 0.000000, 0.000000}
\pgfsetfillcolor{dialinecolor}
\pgfsetarrowsend{to}
\definecolor{dialinecolor}{rgb}{0.000000, 0.000000, 0.000000}
\pgfsetstrokecolor{dialinecolor}
\pgfpathmoveto{\pgfpoint{22.320596\du}{9.030600\du}}
\pgfpathcurveto{\pgfpoint{20.663496\du}{6.439270\du}}{\pgfpoint{23.600000\du}{5.356650\du}}{\pgfpoint{23.250000\du}{8.706650\du}}

\pgfpathmoveto{\pgfpoint{20.320596\du}{12.40600\du}}
\pgfpathcurveto{\pgfpoint{19.663496\du}{9.439270\du}}{\pgfpoint{23.600000\du}{17.356650\du}}{\pgfpoint{23.250000\du}{13.706650\du}}

\pgfusepath{stroke}
}
\pgfsetlinewidth{0.100000\du}
\pgfsetdash{}{0pt}
\pgfsetdash{}{0pt}
\pgfsetmiterjoin
\pgfsetbuttcap
{
\definecolor{dialinecolor}{rgb}{0.000000, 0.000000, 0.000000}
\pgfsetfillcolor{dialinecolor}
\pgfsetarrowsend{to}
\definecolor{dialinecolor}{rgb}{0.000000, 0.000000, 0.000000}
\pgfsetstrokecolor{dialinecolor}
\pgfpathmoveto{\pgfpoint{24.437500\du}{9.706840\du}}
\pgfpathcurveto{\pgfpoint{25.187500\du}{8.506840\du}}{\pgfpoint{25.850000\du}{8.406650\du}}{\pgfpoint{27.650000\du}{9.006650\du}}
\pgfusepath{stroke}
}
\definecolor{dialinecolor}{rgb}{1.000000, 1.000000, 1.000000}
\pgfsetfillcolor{dialinecolor}
\pgfpathellipse{\pgfpoint{10.393364\du}{10.161682\du}}{\pgfpoint{1.478364\du}{0\du}}{\pgfpoint{0\du}{1.301682\du}}
\pgfusepath{fill}
\pgfsetlinewidth{0.100000\du}
\pgfsetdash{}{0pt}
\pgfsetdash{}{0pt}
\pgfsetmiterjoin
\definecolor{dialinecolor}{rgb}{0.000000, 0.000000, 0.000000}
\pgfsetstrokecolor{dialinecolor}
\pgfpathellipse{\pgfpoint{10.393364\du}{10.161682\du}}{\pgfpoint{1.478364\du}{0\du}}{\pgfpoint{0\du}{1.301682\du}}
\pgfusepath{stroke}
\definecolor{dialinecolor}{rgb}{0.000000, 0.000000, 0.000000}
\pgfsetstrokecolor{dialinecolor}
\node at (10.393364\du,10.356682\du){\large $m-{d_k}$};
\pgfsetlinewidth{0.100000\du}
\pgfsetdash{}{0pt}
\pgfsetdash{}{0pt}
\pgfsetmiterjoin
\pgfsetbuttcap
{
\definecolor{dialinecolor}{rgb}{0.000000, 0.000000, 0.000000}
\pgfsetfillcolor{dialinecolor}
\pgfsetarrowsend{to}
\definecolor{dialinecolor}{rgb}{0.000000, 0.000000, 0.000000}
\pgfsetstrokecolor{dialinecolor}
\pgfpathmoveto{\pgfpoint{22.424611\du}{11.428875\du}}
\pgfpathcurveto{\pgfpoint{21.103011\du}{13.928875\du}}{\pgfpoint{12.634565\du}{15.471498\du}}{\pgfpoint{10.927965\du}{11.428198\du}}

\pgfusepath{stroke}
}
\pgfsetlinewidth{0.100000\du}
\pgfsetdash{{\pgflinewidth}{0.200000\du}}{0cm}
\pgfsetdash{{\pgflinewidth}{0.200000\du}}{0cm}
\pgfsetbuttcap
{
\definecolor{dialinecolor}{rgb}{0.000000, 0.000000, 0.000000}
\pgfsetfillcolor{dialinecolor}
\definecolor{dialinecolor}{rgb}{0.000000, 0.000000, 0.000000}
\pgfsetstrokecolor{dialinecolor}
\draw (13.400000\du,10.005000\du)--(15.700000\du,10.005000\du);
}

\definecolor{dialinecolor}{rgb}{1.000000, 1.000000, 1.000000}
\pgfsetfillcolor{dialinecolor}
\pgfpathellipse{\pgfpoint{6.593364\du}{10.206682\du}}{\pgfpoint{1.522250\du}{0\du}}{\pgfpoint{0\du}{1.340323\du}}
\pgfusepath{fill}
\pgfsetlinewidth{0.100000\du}
\pgfsetdash{}{0pt}
\pgfsetdash{}{0pt}
\pgfsetmiterjoin
\definecolor{dialinecolor}{rgb}{0.000000, 0.000000, 0.000000}
\pgfsetstrokecolor{dialinecolor}
\pgfpathellipse{\pgfpoint{6.593364\du}{10.206682\du}}{\pgfpoint{1.522250\du}{0\du}}{\pgfpoint{0\du}{1.340323\du}}
\pgfusepath{stroke}
\definecolor{dialinecolor}{rgb}{0.000000, 0.000000, 0.000000}
\pgfsetstrokecolor{dialinecolor}
\node at (6.593364\du,10.401682\du){\scriptsize $m-{d_k}-1$};
\definecolor{dialinecolor}{rgb}{1.000000, 1.000000, 1.000000}
\pgfsetfillcolor{dialinecolor}
\pgfpathellipse{\pgfpoint{28.008364\du}{10.311682\du}}{\pgfpoint{1.478364\du}{0\du}}{\pgfpoint{0\du}{1.301682\du}}
\pgfusepath{fill}
\pgfsetlinewidth{0.100000\du}
\pgfsetdash{}{0pt}
\pgfsetdash{}{0pt}
\pgfsetmiterjoin
\definecolor{dialinecolor}{rgb}{0.000000, 0.000000, 0.000000}
\pgfsetstrokecolor{dialinecolor}
\pgfpathellipse{\pgfpoint{28.008364\du}{10.311682\du}}{\pgfpoint{1.478364\du}{0\du}}{\pgfpoint{0\du}{1.301682\du}}
\pgfusepath{stroke}
\definecolor{dialinecolor}{rgb}{0.000000, 0.000000, 0.000000}
\pgfsetstrokecolor{dialinecolor}
\node at (28.008364\du,10.506682\du){\large $m+1$};
\definecolor{dialinecolor}{rgb}{0.000000, 0.000000, 0.000000}
\pgfsetstrokecolor{dialinecolor}
\node[anchor=west] at (15.500000\du,6.200000\du){\large $q_k^{d_k}(0|m)(1-p)+q_k^{d_k}(1|m)p$};
\definecolor{dialinecolor}{rgb}{0.000000, 0.000000, 0.000000}
\pgfsetstrokecolor{dialinecolor}
\node[anchor=west] at (22.00000\du,12.850000\du){\large ${ q_k^{d_k}(1|m)(1-p)+q_k^{d_k}(2|m)p}$};
\definecolor{dialinecolor}{rgb}{0.000000, 0.000000, 0.000000}
\pgfsetstrokecolor{dialinecolor}
\node[anchor=west] at (13.0000\du,14.500000\du){\large $q_k^{d_k}({d_k}|m)(1-p)$};
\definecolor{dialinecolor}{rgb}{0.000000, 0.000000, 0.000000}
\pgfsetstrokecolor{dialinecolor}
\node[anchor=west] at (24.6000\du,8.0000\du){\large $q_k^{d_k}(0|m)p$};
\end{tikzpicture}}
\caption{One step state evolution of Markov chain in LPS-$d_k$ scheduling scheme for state $m\le n$.}\label{fig:LPS_scheme}
\end{figure}
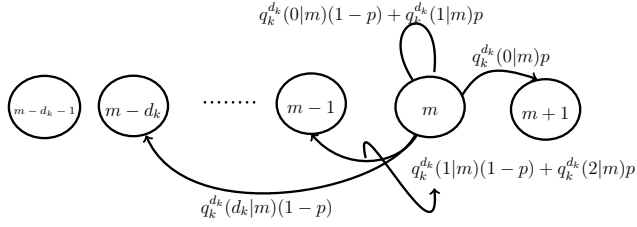 
The transition probability from state $i$ to state $j$, $p_{i,j}$, for LPS-$d_k$ scheme is given by:
\[
 p_{i, j} = 
  \begin{cases} 
   pq_k^{d_k}(0|i) & \text{if } j = i+1, \\
   pq_k^{d_k}(i-j+1|i) + (1-p)q_k^{d_k}(i-j|i)    & \text{if } j = i,i-1,\dots, i-\min \{i,d_k\}+1, \\
   (1-p)q_k^{d_k}( \min \{i,d_k\}|i)    & \text{if } j = i-\min \{i,d_k\},\\
   0       & \text{ Otherwise}.
  \end{cases}
\]
Obtaining a  closed-form expression for the  stationary distribution for arbitrary~$d_k$ seems infeasible, and hence, we will compute the stationary probabilities and Whittle's index numerically in Section~\ref{numerics}. In the next section we will consider the case $d_k=1$ and derive a closed-form expression.

\subsection{FCFS scheduling discipline}\label{FCFS_scheme}


Under FCFS, a job in front of the queue departs with probability $q_k$ in a given time slot. Thus, the evolution of DTMC for server $k$ is according to the following transition probability matrix under the threshold policy $n$:
$$\mathbb{P}_{k} = 
 \begin{pmatrix}
1-p & p &0 & 0 & \cdots & 0 & 0 &0 \\
  d_k & r_k &b_k & 0 & \cdots & 0 & 0 &0 \\
0 &  d_k & r_k &b_k &  \cdots & 0 & 0 &0 \\
  \vdots  & \vdots  & \vdots  & \vdots &\ddots &\vdots&\vdots&\vdots \\
0 & 0 & 0 & 0 &  \cdots & r_k & b_k &0 \\ 
0 & 0 & 0 & 0 &  \cdots & d_k & r_k & b_k \\
0 & 0 & 0 & 0 &  \cdots & 0 & q_k &1-q_k \\
 \end{pmatrix}_{(n+2)\times (n+2)},
$$
where $b_k = p(1-q_k),~d_k = q_k(1-p)$ and $r_k = pq_k + (1-p)(1-q_k)$. 

The stationary distribution  $\pi_k(\cdot)$ of the above DTMC is given by solving a set of linear equations $\mathbb{\pi}_k^n = \mathbf{\pi}^n_k\mathbb{P}_k$ with the normalizing condition ($\sum\limits_{i} \pi_k^n(i) = 1$). The solution can be written in closed form and is given by:
\begin{eqnarray}\label{stat_prob_0}
\pi_k^n(0) &=& \frac{d_k}{p}  \pi_k^n(1), \\\label{stat_prob_1}
\pi_k^n(1)&=& \frac{\frac{p}{q_k}\left(1-\frac{p}{q_k}\right)(1-p)^{n-1}}{(1-p)^n - \left(\frac{p}{q_k}\right)^{n+2}(1-q_k)^n},\\\label{stat_prob_m}
\pi_k^n(m)&=& \left(\frac{b_k}{d_k}\right)^{m-1}\pi_k^n(1),~m=2,3,...,n,\\\label{stat_prob_n}
\pi_k^n(n+1)&=& \frac{b_k}{q_k}\left(\frac{b_k}{d_k}\right)^{n-1}\pi_k^n(1).
\end{eqnarray}
We can now check the condition needed in supposition of Proposition \ref{proposition:monotone_index}, i.e., $\sum\limits_{i=0}^n\pi_k^n(i)$ is strictly increasing in $n$, or equivalently,  $\pi_k^n(n+1)< \pi_k^{n-1}(n)$. After some algebra, the latter simplifies to 
$\frac{b_k}{d_k}\pi_k^n(1) < \pi_k^{n-1}(1).$
Using the stationary probabilities, it can be easily verified that the latter holds true. In addition, we will prove that the expression~\eqref{general_index_value} in Proposition \ref{proposition:monotone_index} is non-increasing. This provides us with a closed form expression for Whittle's index under FCFS  (see Appendix A.2. for the proof). 

\begin{prop}\label{prop:FCFS}
Whittle's index, $W_k(n)$, for bandit $k$ under FCFS is given by
\begin{eqnarray}\nonumber
&&pD+ \frac{p^2}{q_k^2(1-p)}\sum_{m=0}^{n-1}C_k(m)\left(\frac{b_k}{d_k}\right)^{m-1} -\frac{C_k(n)q_k}{(q_k-p)}\left[p + \left(\frac{p}{q_k}\right)^{n+1}\left(\frac{1-q_k}{1-p}\right)^{n-1}\left(\frac{p}{q_k}-p-1 \right) \right] 
\\
&&\hspace{3cm}
\label{Index_FCFS}
  -\frac{C_k(n+1)p(1-q_k)}{(q_k-p)}\left[1- \left(\frac{p}{q_k}\right)^{n+1}\left(\frac{1-q_k}{1-p}\right)^{n-1} \right],
\end{eqnarray}
which is a non-increasing function.
In addition, $$W_k(0)= pD + C_k(1)p\frac{q_k+p-1}{q_k},$$ and
if $\lim_{n\to\infty}C_k(n)\to \infty,$  then $\lim_{n\uparrow\infty}W_k(n)\rightarrow-\infty.$
\end{prop}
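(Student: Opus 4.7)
The plan is to specialize the general Whittle index expression in Proposition \ref{proposition:monotone_index} to the FCFS case by plugging in the closed-form stationary probabilities (\ref{stat_prob_0})--(\ref{stat_prob_n}), and then verify the three additional claims (monotonicity, the value at $n=0$, and the limit) directly from the resulting formula.

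First I would rewrite the denominator of (\ref{general_index_value}) using $\sum_{m=0}^{n}\pi_k^n(m)=1-\pi_k^n(n+1)$ and the analogous identity for $\pi_k^{n-1}$, which gives
\[
\sum_{m=0}^{n}\pi_k^n(m)-\sum_{m=0}^{n-1}\pi_k^{n-1}(m)=\pi_k^{n-1}(n)-\pi_k^n(n+1).
\]
Both terms have simple expressions in $(p,q_k,b_k,d_k)$ from (\ref{stat_prob_n}), and after clearing the geometric factor $(b_k/d_k)^{n-1}$ and the normalizing ratio coming from (\ref{stat_prob_1}), the denominator simplifies to a clean rational expression in $p$, $q_k$, and the ratio $(p(1-q_k))/(q_k(1-p))$. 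I would then split the numerator into three pieces: the telescoping sum $\sum_{m=0}^{n-1}C_k(m)[\pi_k^n(m)-\pi_k^{n-1}(m)]$, the boundary term $C_k(n)[\pi_k^n(n)-\pi_k^{n-1}(n)]$, and $C_k(n+1)\pi_k^n(n+1)$. Using (\ref{stat_prob_0})--(\ref{stat_prob_m}), the difference $\pi_k^n(m)-\pi_k^{n-1}(m)$ for $m\le n-1$ factors as $(b_k/d_k)^{m-1}$ times $\pi_k^n(1)-\pi_k^{n-1}(1)$, which is where the factor $p^2/(q_k^2(1-p))$ in the stated formula comes from (after using $b_k/d_k=p(1-q_k)/(q_k(1-p))$ and the explicit form of $\pi_k^n(1)$). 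Collecting the $C_k(n)$ and $C_k(n+1)$ coefficients and dividing by the denominator should produce exactly (\ref{Index_FCFS}).

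For the non-increasing claim, I would compute $W_k(n)-W_k(n-1)$ directly from (\ref{Index_FCFS}). The coefficient of $C_k(m)$ for $0\le m\le n-2$ is unchanged, so these terms cancel, and one is left with an expression involving $C_k(n-1)$, $C_k(n)$, $C_k(n+1)$. Using the monotonicity assumption $C_k(m+1)\ge C_k(m)$ and writing the remaining expression in terms of $\rho:=(p/q_k)(1-q_k)/(1-p)<1$ (stability gives $p<q_k$ so $\rho<1$), I would verify that each grouped coefficient has the right sign; this is the step I expect to be the main obstacle, since the algebra is bulky and one must check sign patterns of several terms simultaneously rather than invoke a general principle.

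The value $W_k(0)$ follows from setting $n=0$ in (\ref{Index_FCFS}): the finite sum $\sum_{m=0}^{-1}$ vanishes, the bracket multiplying $C_k(0)$ vanishes because $(p/q_k)^1(1-q_k)^{-1}/(1-p)^{-1}\cdot (\text{stuff})$ cancels, and the remaining $C_k(1)$ term simplifies using $1-\rho=(q_k+p-1)/(q_k(1-p))$ times $p(1-q_k)/(q_k-p)$, yielding the stated expression. Finally, for the limit, under stability $\rho<1$, so as $n\to\infty$ the $\rho^{n-1}$ correction terms in the $C_k(n)$ and $C_k(n+1)$ coefficients vanish, leaving the dominant behavior $W_k(n)\sim -\frac{pq_k}{q_k-p}C_k(n)+O(C_k(n-1))$ after combining with the bounded contribution of the geometric sum $\sum_{m=0}^{n-1}C_k(m)(b_k/d_k)^{m-1}$ (which remains finite per unit of $C_k(n)$ since $b_k/d_k=\rho<1$ when $q_k>p$; if the sum diverges one bounds it by $C_k(n-1)$ times a geometric constant and compares to $C_k(n)$). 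Since $C_k(n)\to\infty$ and the leading coefficient is strictly negative, $W_k(n)\to-\infty$.
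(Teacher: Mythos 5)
Your overall route is the same as the paper's: specialize \eqref{general_index_value} to FCFS via \eqref{stat_prob_0}--\eqref{stat_prob_n}, split the numerator into the $m\le n-1$ sum, the $C_k(n)$ boundary term, and the $C_k(n+1)$ term, compute the three ratios against the denominator $\pi_k^{n-1}(n)-\pi_k^{n}(n+1)$, and then check monotonicity, $W_k(0)$, and the limit from the resulting formula. The gap is that you assume throughout that ``stability gives $p<q_k$ so $\rho<1$.'' The model only assumes $p<\sum_k q_k$; an individual server may well have $q_k<p$ (the paper's own experiments use $q_1=0.1$ with $p$ up to $0.8$), and the proposition is claimed without restricting to $p<q_k$. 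When $q_k<p$ you have $\rho>1$ and $q_k-p<0$, so your sign analysis for monotonicity and your limit argument (``the $\rho^{n-1}$ correction terms vanish'') both break down. The paper handles this by showing that the consecutive difference factors cleanly as
\[
W_k(n+1)-W_k(n)=\frac{p\,q_k\bigl(C_k(n)-C_k(n+1)\bigr)+p(1-q_k)\bigl(C_k(n+1)-C_k(n+2)\bigr)}{q_k-p}\left[1-\left(\tfrac{p}{q_k}\right)^2\rho^{\,n}\right],
\]
where the product $\frac{1}{q_k-p}\bigl[1-(p/q_k)^2\rho^n\bigr]$ is positive in \emph{both} cases (both factors flip sign together when $q_k<p$), so non-increase follows from $C_k$ non-decreasing without any case split; for the limit, the case $q_k<p$ is then settled by noting this difference tends to $-\infty$, while only the case $q_k>p$ is done by taking the direct limit of the formula as you propose. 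You flag the monotonicity computation as ``the main obstacle'' and do not carry it out; that factorization is precisely the missing ingredient, and without it (or the $q_k<p$ case) the proof is incomplete.

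A smaller but genuine error: your justification of $W_k(0)$ is wrong. The bracket multiplying $C_k(0)$ at $n=0$ is $p+\frac{p}{q_k}\cdot\frac{1-p}{1-q_k}\bigl(\frac{p}{q_k}-p-1\bigr)$, which does not vanish in general (e.g.\ $p=0.3$, $q_k=0.6$ gives $-0.4$). The stated $W_k(0)$ is recovered only because the $C_k(0)$ contribution drops out (e.g.\ $C_k(0)=0$, as for the linear cost), not because its coefficient is zero; if you keep a general $C_k(0)$ you must track that term explicitly.
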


\begin{rmk}
Load balancing problems with \textit{FCFS schedulers} have been previously modeled as  multi-armed restless bandit problems and closed form expressions for Whittle's index are available in the literature for the  \textit{continuous} time setting. For example, Ni\~{n}o-mora studied  a load balancing problem with finite buffer queues and derived a closed form expression (see \cite[section 8.1]{nino2002dynamic}) whereas  \cite{argon2009dynamic} and \cite{glazebrook2009index} derived Whittle's index for load-balancing problems with dedicated arrivals and abandonments, respectively.  Note that these indices are different from ours, as we consider a discrete-time setting.
\end{rmk}

%
%
%
%
%
%
%
%
%

Recall that Whittle's index policy is proposed as an efficient heuristic. It is however not known in general whether Whittle's index policy will indeed make the system stable. However, an immediate consequence  from $\lim_{n\uparrow\infty}W_k(n)\rightarrow-\infty$, is that the Whittle index policy is stable for the systems with blocking: As the Whittle index (for any $d$ and $q$) approaches to $-\infty$, there is a finite threshold for each server above which  Whittle's index policy will block new arriving jobs. 

Without blocking, every job needs to be routed to some server.  Numerically we observed that the optimal action in Whittle's index policy  is determined by some linear switching curve (see Figures \ref{switching_dssd}, \ref{switching_ssdd} and \ref{homogeneous_queues}). That is, when $N_k$ is below some function which is linear in $N_i$, $i\not= k$, then we send to server~$k$.   
One would therefore expect that Whittle's index policy is maximum stable as it will not dispatch jobs to relatively crowded servers. 
In~\cite{ZWTSS17} maximum stability was proved for a set of load balancing policies that made sure that smaller queues were sent more traffic. 

\subsubsection{Linear cost criterion}\label{FCFS_linear_cost_index}
For the linear cost function, that is, $C_k(m) = mC_k$, the Whittle index (as stated in Proposition \ref{prop:FCFS}) can be rewritten as 
\begin{equation}\label{linear_cost_FCFS_index}
W_k(n) = pD+ \frac{C_kp^2 (1-p)}{(q_k-p)^2} -\frac{C_kp(1-q_k)}{q_k-p} - \frac{nC_kp}{q_k-p}- \frac{C_kp^3(1-p)}{q_k(q_k-p)^2}\left(\frac{p(1-q_k)}{q_k(1-p)}\right)^{n}.
\end{equation}
The details  can be found in \ref{appendix:proofFCFS}.  

In case $p<q_k$, it follows from~\eqref{linear_cost_FCFS_index} that   $W(n)$ behaves as 
$$\frac{W(n)}{n} = C_k   \frac{p}{q_k-p} +o(1), \mbox{ for large $n$}.$$
That is, when there are many jobs in the servers, under Whittle's index policy, the job is routed to the server having highest value $C_k n_k \frac{p}{q_k-p}$, with $n_k$ as number of jobs in the server~$k$. 
Instead, under JSEW, the job is routed to the server with the least expected workload, or in other words, to the server with the largest $ n_k/q_k$ index. 
Hence, JSEW is greedy, and only sees what lies ahead of an incoming job, i.e., it ignores the impact of jobs in the future. Whittle's index policy differs from JSEW (in case $C_k=1$) by the multiplicative term $p/ (1- p/q)$, that is, it also takes into account the impact of future arrivals $p$. In the numerical section we will see that Whittle's index policy indeed performs better than JSEW. 

%
%

%
%
%


\section{Other load balancing policies}\label{scheduling_schemes}
In the numerical section we compare the performance of Whittle's index policy with various dispatching policies that we describe here.

\subsection{Optimal load balancing policy}\label{optimal_policy}
We numerically solve Bellman's optimality equation by value iteration to compute the optimal performance. Consider a system with $K$ servers and let $m_1, m_2, \cdots, m_K$ denote the number of jobs in these servers. Bellman's optimality equation for the average cost criterion is given by:
\begin{equation}\label{optimality_eqn}
g+ V_{t+1}(m_1, m_2, \cdots, m_K) =\sum\limits_{i=1}^KC_i(m_i) +  \min_{i \in \{1,2,\dots, K\} } \left\lbrace \mathbb{E}^i(V_t(.)|m_1, m_2, \cdots, m_K) \right\rbrace,
\end{equation}
where $\mathbb{E}^i(.)$ denotes the expectation with respect to the transition probability when an arriving job is dispatched to server $i$ and $g$ represents the average cost under an optimal policy. An optimal server to send a new arriving job to, will be a server that minimizes the RHS of \eqref{optimality_eqn}.
In the numerical section, we will focus on $K=2$. The term $\mathbb{E}^i(V_t(.)|m_1, m_2)$ then simplifies to 
$$\mathbb{E}^i(V_t(.)|m_1, m_2) = p \sum_{i_1=0}^{m_1} \sum_{i_2=0}^{m_2}  q_1^{d_1}(i_1|m_1)q_2^{d_2}(i_2|m_2)V_t(m_1+\mathbf{1}_{\{\mathcal{A}=1\}} - i_1, m_2+\mathbf{1}_{\{\mathcal{A}=2\}} - i_2)$$ $$ + (1-p) \sum_{i_1=0}^{m_1} \sum_{i_2=0}^{m_2} q_1(i_1)q_2(i_2)V_t(m_1 - i_1, m_2 - i_2), $$
where $q_l^d(i|m)$ is given in~\eqref{LPSD_prob} and denotes the probability of $i$ jobs departing in one time slot, when there are $m$  jobs present in server $l$ with LPS-$d$ service discipline and $\mathbf{1}_{\{\mathcal{A}=j\}},~j=1,2$ presents the indicator function of the event that an arrival is dispatched to queue $j$. Due to the curse of dimensionality, the optimality equation~\eqref{optimality_eqn} cannot be solved for a moderate number of servers. We therefore use the value iteration algorithm to find the optimal policy numerically, which we describe below:

\begin{description}
\item[Step 0:] Initialization: $V_0(m) = 0$ for all states $m=(m_1,\ldots, m_K)\in \mathcal{S}$, where $\mathcal{S}$ is state space. 
\item[Step 1:] Evaluate $V_{t+1}(m)~\forall~m \in \mathcal{S}$ using Bellmans optimality equation (\ref{optimality_eqn}).
\item[Step 2:] \textit{Stopping criterion:} If $\max\limits_{m\in \mathcal{S}} \{V_{t+1}(m) - V_{t}(m)\}- \min\limits_{m\in \mathcal{S}} \{V_{t+1}(m) - V_{t}(m)\} \le \epsilon,$ stop.   Otherwise increase $t$ by $t+1$ and go to step 1. 
\end{description}
\subsection{RSA, JSQ and JSEW}\label{sec:dispatch}
\begin{itemize}
\item \textbf{Random server allocation (RSA):} 
An incoming job is dispatched to a server chosen according to a uniform distribution. Information regarding the number of jobs in the server,  server scheduling disciplines, service rates, and cost of service is not taken into account while making a decision to choose a server. 

\item \textbf{Join the shortest queue (JSQ):} 
An incoming job is routed to a server with
the least number of jobs, $n_k$, and ties are broken randomly. 
Thus, JSQ strives to balance load across the servers, reducing the probability of one server having several jobs while another server sits idle. JSQ is blind in terms of using the information of server speed and scheduling discipline.   

\item \textbf{Join the shortest expected workload (JSEW)} An incoming job is dispatched to the queue with the least expected workload, i.e.,  $n_k/q_k$ for server~$k$.
  This dispatching rule uses the information of server speed but is independent of the scheduling discipline of the server. 

\end{itemize}

\section[Characteristics and Performance of index policy ]{Characteristics and Performance of index policy\footnote{All numerical codes are available in github repository: \url{https://github.com/manugupta-or/Limited_processor_sharing} } }\label{numerics}

We consider a linear cost criterion in Section \ref{perf_linear_cost} and a weighted second order throughput based criterion (motivated by \cite{singh2015index}) capturing the mean variance trade-off in Section \ref{2nd_order_throughput}. Our main observations are that Whittle index policy is close to optimal under a wide range of parameter settings, and that with the weighted second order throughput criterion, the qualitative properties of Whittle's index policy strongly depend on the value of $d$ used in servers.

\subsection{Linear cost criterion}\label{perf_linear_cost}
In this section we assume a linear cost structure, i.e, $C_k(n)=n$ for all $k$.

\subsubsection{Characteristics of index policy}
We numerically examine the pattern of indices as a function of states. We consider three servers each with different scheduling policies ($d=\infty$ (PS), LPS-$d$, $d=1$ (FCFS)). An instance of indices patterns is shown in Figure \ref{pattern_indices} which confirms the non-increasing nature of Equation (\ref{general_index_value}) in Proposition \ref{proposition:monotone_index}.
\begin{figure}[!htbp]
  \centering
  \begin{minipage}[b]{0.49\textwidth}
\includegraphics[scale=0.47]{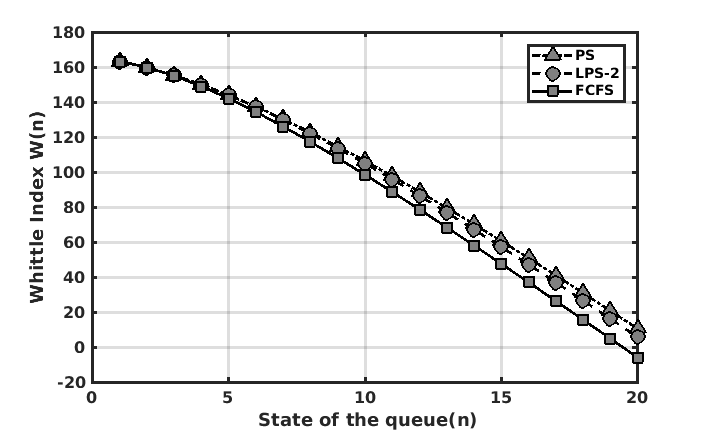}
\caption{Whittle's index for PS, LPS-2 and FCFS.}\label{pattern_indices}  \end{minipage}
  \begin{minipage}[b]{0.48\textwidth}
\includegraphics[scale=0.5]{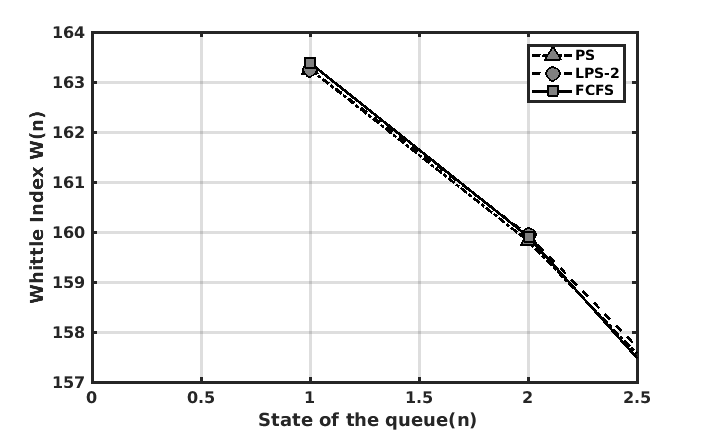}    
\caption{Highest index value with FCFS scheduling}\label{FCFS_fig}
  \end{minipage}
\end{figure}
\begin{figure}[!htbp]
  \centering
  \begin{minipage}[b]{0.48\textwidth}
\includegraphics[scale=0.5]{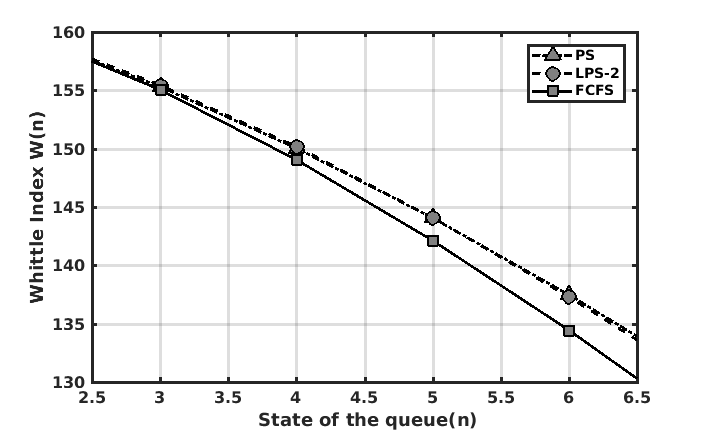}
\caption{Highest index value with LPS-$2$ scheduling}\label{LPS_fig}
 \end{minipage}
  \begin{minipage}[b]{0.48\textwidth}
\includegraphics[scale=0.5]{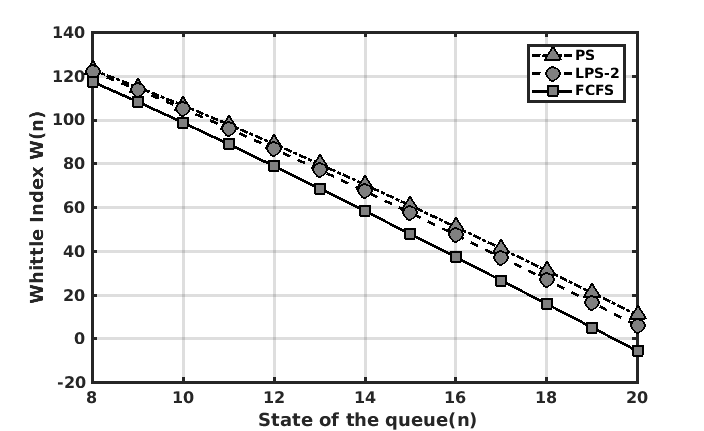}
\caption{Highest index value with PS scheduling}\label{PS_fig}
  \end{minipage}
\end{figure}
We chose $D=300,~p=0.55,~q_1 =q_2=q_3 = 0.6,~d_1 = \infty ~(PS),~d_2=2~(LPS-2)\text{ and }d_3 = 1~(FCFS)$. We took equal  capacity for the three servers in order to assess the importance of scheduling disciplines employed at different servers when dispatching jobs. It can be observed that the service discipline has a small impact on the Whittle index. We magnify the index pattern of Figure \ref{pattern_indices} and present the instances where FCFS, LPS-$2$ and PS have the highest Whittle's index in Figures \ref{FCFS_fig}, \ref{LPS_fig}, and \ref{PS_fig}, respectively. It is evident from these figures that FCFS is preferred for small queues ($n\le 2$), LPS-2 is preferred for medium size queues ($2<n<7$) and PS is preferred for large queue sizes ($n\ge 7$). This phenomenon is intuitive: when there are not many jobs in the server, a new arriving job does not mind waiting, and receiving  the full service once in service (as under FCFS). On the other hand, jobs will prefer to receive the service immediately upon arrival when the queue length is large (as under PS). Further, note from Figure \ref{PS_fig} that PS queues with $n=13$ (higher Whittle's index) is preferred over the FCFS queue with $n=12$. Thus, the index policy accounts for the trade-off between waiting for the job's turn and getting the full dedication of the server against sharing the server but relatively less waiting in queue. 


\subsubsection{Comparison with JSEW/JSQ}
We consider two heterogeneous servers with service discipline parameters $d_1$ and $d_2$, respectively. 

\begin{figure}[!htbp]
  \centering
  \begin{minipage}[b]{0.48\textwidth}
\includegraphics[scale=0.7]{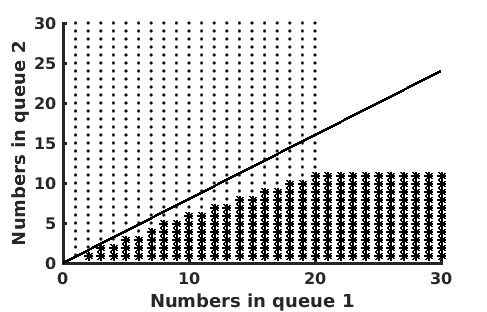}
\caption{Job dispatching with heterogeneous server speeds but same scheduling disciplines. Parameter settings: $p=0.3,~q_1 = 0.5,~q_2 = 0.4,~d_1 = 2,~d_2=2,~D=100$.}\label{switching_dssd}
 \end{minipage}\hfill
  \begin{minipage}[b]{0.48\textwidth}
\includegraphics[scale=0.7]{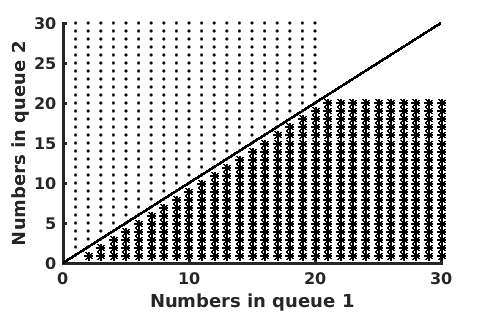}
\caption{Job dispatching with heterogeneous  scheduling disciplines but same server speeds. Parameter settings: $p=0.4,~q_1 = 0.6,~q_2 = 0.6,~d_1 = 1,~d_2=10,~D=100$.}\label{switching_ssdd}
  \end{minipage}
\end{figure}
%
In  Figures \ref{switching_dssd} and \ref{switching_ssdd}, we present the actions taken under Whittle's index policy. Figure \ref{switching_dssd} (Figure \ref{switching_ssdd}) is produced while varying the server speeds (service-disciplines) and keeping the same service disciplines (server speeds) in both servers. Parameter settings are as mentioned in the caption. 
In the area with ``$\cdot$" (``$\ast$"), queue 1 (queue 2) is prioritized in dispatching the jobs and in the white area jobs are blocked. 
In each of these figures, the straight line represents the switching curve for JSEW. 
Jobs are dispatched to queue 1 (queue 2) for the states above (below) the switching curve under JSEW. 

From the Figures \ref{switching_dssd} and \ref{switching_ssdd}, we see that Whittle's index policy differs from JSEW when $q_1\neq q_2$. 
For instance, having $q_1>q_2$ makes that Whittle's index policy dispatches more jobs to queue 1, see  Figure \ref{switching_dssd}.
On the other hand, from Figure~\ref{switching_ssdd} we observe that the scheduling discipline has hardly any impact on the load balancing decision: the actions taken under Whittle's index policy coincide with those under JSEW when $d_1=1$ and $d_2=10$. 
In general, we concluded from extensive numerical experiments, that 
the impact of $d$ on the Whittle index policy is negligible for linear cost functions. This is not surprising as it is well known that in the continuous time setting,  the evolution of the Markov process   will be independent on how the capacity is shared among the jobs, that is, is independent on the value of $d$.

%

\subsubsection{Performance comparison}
In this section we compare the performance under our Whittle index policy with that of JSQ, JSEW and RSA.  We do not allow any blocking in the system.

\begin{figure}[!htbp]
  \centering
  \begin{minipage}[b]{0.48\textwidth}
\includegraphics[scale=0.76]{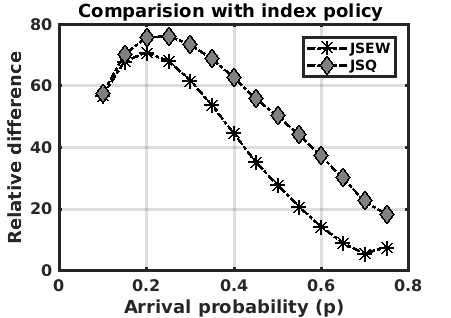}
\caption{Percentage relative difference in performance as compared to  Whittle's index policy. Parameter settings:  $q_1 = 0.1,~q_2 = 0.7,~d_1 = 3$ and $d_2 = 5$.
}\label{Compare_RSA}
 \end{minipage}\hfill
  \begin{minipage}[b]{0.48\textwidth}
\includegraphics[scale=0.4]{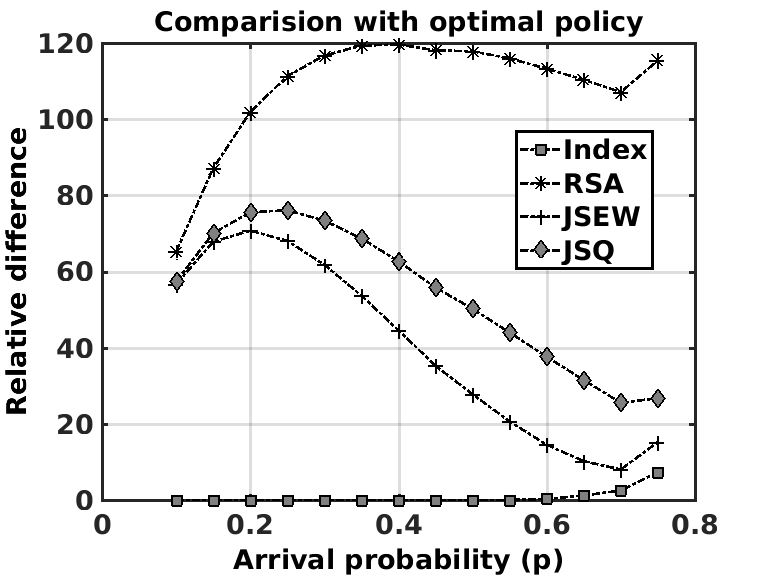}
\caption{ Percentage relative difference in performance as compared to the optimal policy. Parameter settings:  $q_1 = 0.1,~q_2 = 0.7,~d_1 = 3$ and $d_2 = 5$. }\label{Compare_JSQ2}
  \end{minipage}
\end{figure}

In Figure~\ref{Compare_RSA} and~\ref{Compare_JSQ2}  we plot the relative difference of the performance compared to that under Whittle's index policy  and that under an optimal policy, respectively, and let $p$ run from 0 till $q_1+q_2$ (the stability region). The relative difference (\%) (compared to the index policy) is computed as follows:
\begin{equation}
\text{Relative difference} = \frac{\mathbb{E}(N^{\psi})-\mathbb{E}(N^{Index})}{\mathbb{E}(N^{Index})}\times 100,
\end{equation}
where $\mathbb{E}(N^{\psi})$ is the expected number of jobs under policy $\psi \in \{\text{JSQ, JSEW}\}$. Similarly, the  relative difference (\%) (compared to the optimal policy) is computed as follows:
\begin{equation}\label{relative_imp}
\text{Relative difference} = \frac{\mathbb{E}(N^{\psi})-\mathbb{E}(N^{Opt})}{\mathbb{E}(N^{Opt})}\times 100,
\end{equation}
where $\mathbb{E}(N^{\psi})$ is the expected number of jobs under policy $\psi \in \{\text{RSA, JSQ, JSEW, Index}\}$. We implement the value iteration algorithm as described in Section \ref{optimal_policy} to compute the optimal performance. First of all, we notice that Whittle's index policy provides a stable system. 
Second, over a wide set of parameters, we notice the following general pattern. The Whittle index policy \textit{outperforms} all the standard dispatching rules and the random server allocation has by far the worst performance for higher values of $p$. 

%
From Figure~\ref{Compare_RSA}, we notice a significant relative difference (upto 75\%) in performance with respect to JSQ/JSEW. Thus, one of the important conclusions of these numerical experiments is that the Whittle index policy \textit{significantly} outperforms all these standard dispatching rules. We further note from Figure \ref{Compare_JSQ2} that relative difference of Whittle's index policy compared to the optimal policy is very small (less than 3\%). We note that in other related load balancing problems, it has been established that Whittle's index policy is asymptotically optimal in light traffic, see for example~\cite{glazebrook2009index}.


%
%


\subsection{Second order throughput cost criterion}\label{2nd_order_throughput}
In this section, we consider the problem of variance minimization that has been used in Markov decision processes to regularize systems (see \cite{kawai1987variance, kurano1987markov}). More specifically, we consider a cost criterion that consists of the linear cost and mean-variance trade off cost as in \cite{singh2015index}:
$$C_k(n) = \beta n + (1-\beta)\sum\limits_{i=1}^{\min\{n,{d_k}\}}(i^2-i\theta)q_k^{d_k}(i|n),$$
where $\theta$ is a parameter to tune the trade-off between mean throughput and the service regularity (second moment of throughput). By varying $\theta$, one can explore the entire Pareto frontier of mean-variance tradeoffs (see \cite{singh2015index}). And $\beta$ is the weight  associated with the linear cost term. Note that $\beta = 0$ would imply that the cost function, $C_k(n)$, is independent of $n$ (constant) for $n>d_k$. 

We can show that $q_k^{d_k}(i|n)$ is a non-decreasing function of $n$ (see \ref{non_decreasing_prob}). Thus,  $C_k(n)$ is a non-decreasing function in~$n$, hence, the Whittle index is as stated in~\eqref{general_index_value}, under the condition that $\sum_{m=0}^n \pi_k^n(m)$ is strictly increasing in~$n$.  


%

\subsubsection{Characteristics of index policy}

In Figure \ref{Indices_pattern_sameq} and \ref{Indices_pattern_diffq} we plot the Whittle index functions for an LPS-4 server and an LPS-6 server. We do this both for homogeneous server speeds (Figure~\ref{Indices_pattern_sameq}) and heterogeneous server speeds  (Figure~\ref{Indices_pattern_diffq}), respectively. We set $\beta = 0.001$ and $\theta = 0.9$.

\begin{figure}[!htbp]
  \centering
  \begin{minipage}[b]{0.48\textwidth}
\includegraphics[scale=0.65]{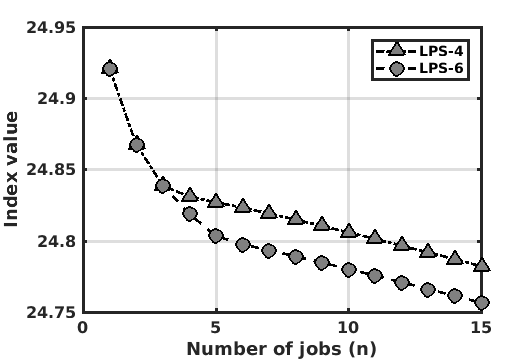}
\caption{Whittle's indices with homogeneous server speeds ($p=0.25,~q_1 = q_2 =0.3$).}\label{Indices_pattern_sameq}
 \end{minipage}\hfill
  \begin{minipage}[b]{0.48\textwidth}
\includegraphics[scale=0.65]{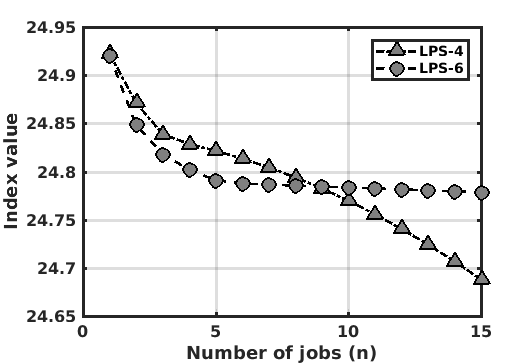}
\caption{Whittle's indices with heterogeneous server speeds ($p=0.25,~q_1 = 0.25, q_2 =0.5 $).}\label{Indices_pattern_diffq}
  \end{minipage}
\end{figure}

In Figure \ref{Indices_pattern_sameq}, it can be seen that when there are $n<4$ jobs in the system,  the index is the same for both servers. An explanation for this is that in such states, both servers equally share their capacity among all $n$ jobs. However, note that under LPS-6, there is a possibility that two more jobs enter and they will also receive a fair share of the capacity. Therefore, for $n<4$ we observe that both servers are equally attractive, however, when $n=4$,  LPS-4 is preferred over LPS-6, due to a possible new arrival of a job. 

We further observe in Figure \ref{Indices_pattern_sameq}  that $W^{LPS-4}(10)> W^{LPS-6}(5)$, that is, Whittle's index policy chooses an LPS-4 server with 15 (or less) jobs over an LPS-6 server with 5 (or more) jobs. This preference for a higher loaded server comes from the fact that the LPS-4 system has a smaller variation in terms of second-order throughput and the cost criterion consist of such term. Similar observations can be made  in Figure \ref{Indices_pattern_diffq} for heterogeneous server speeds.

%


\begin{figure}[htp]
\centering
\includegraphics[width=.33\textwidth]{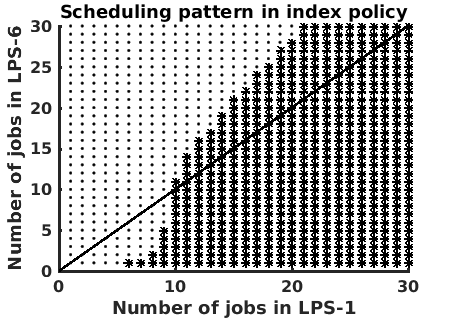}\hfill
\includegraphics[width=.33\textwidth]{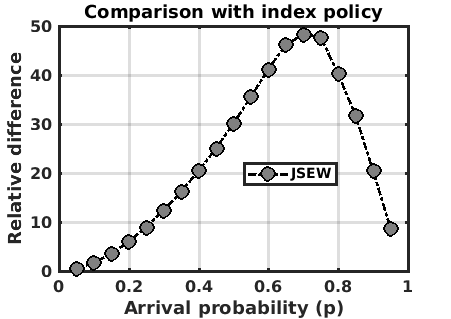}\hfill
\includegraphics[width=.33\textwidth]{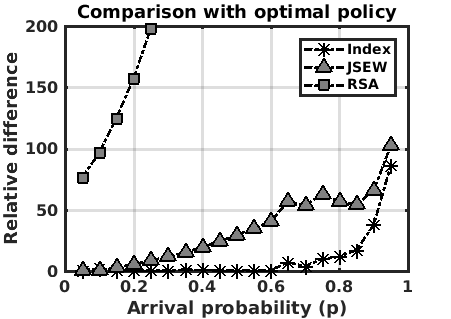}
\caption{Scheduling pattern (left) under index policy with $p=0.65$, percentage relative difference (middle and  {right}) as compare to index and optimal (respectively) policy for homogeneous server speeds  ($q_1 =q_2 = 0.5$) and $d_1 = 1, d_2 =6$. }
\label{homogeneous_queues}
\end{figure}

\subsubsection{Performance comparison}

We present the scheduling pattern in Figure \ref{homogeneous_queues} (left) for homogeneous server speeds with parameter settings mentioned in the caption. That is, the servers only differ in their implemented service policies. We observe that this difference has a large impact on the scheduling pattern:  Whittle's index policy might send jobs to LPS-1, even though this server is higher loaded than the LPS-6 server and vice-versa depending on the state of the system. This is unlike with linear costs, where we saw in Figure~\ref{switching_ssdd} that under Whittle's index policy one would send the job to the server with the least number of jobs, even though the servers have different implemented policies. We also plot the switching curve for JSQ and JSEW in Figure~\ref{homogeneous_queues} (left)  (straight line with $45^\circ$ slope). We observe that the scheduling pattern in Figure \ref{homogeneous_queues} is quite different from JSQ/JSEW. 
In Figure~\ref{homogeneous_queues} (middle and right), we plot the relative difference (in \%) when comparing to  Whittle's index policy and an optimal policy, respectively.  We notice that losses under the Whittle  index policy are uniformly minimum across different load factors as compare to other dispatching rules (JSQ, JSEW, RSA). 

We perform another set of experiments with heterogeneous server speeds for the same parameters setting as in Figure \ref{Indices_pattern_diffq} and notice a similar phenomenon (see Figure \ref{hetrogeneous_queues}). 
%
%
%
%
\begin{figure}[htp]

\centering
\includegraphics[width=.33\textwidth]{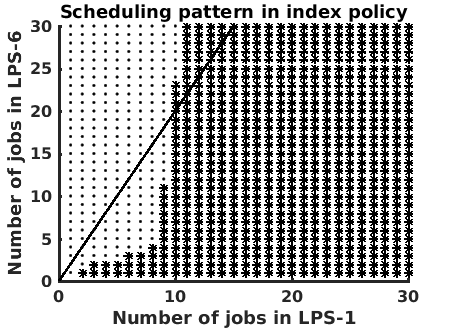}\hfill
\includegraphics[width=.33\textwidth]{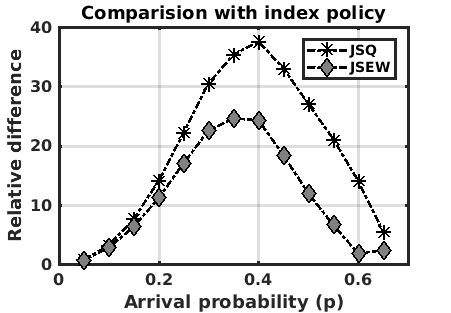}\hfill
\includegraphics[width=.33\textwidth]{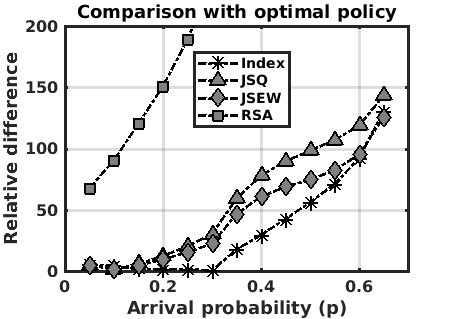}

\caption{Scheduling pattern (left) under index policy with $p=0.4
$, percentage relative difference (middle and {right}) as compare to index and optimal (respectively) policy for heterogeneous server speeds ($q_1 = 0.2$ and $q_2 = 0.5$) and $d_1 = 1, d_2 =6$. 
}
\label{hetrogeneous_queues}

\end{figure}

%

\section{Concluding remark}\label{sec:conclusion}
We have considered the load-balancing problem for LPS-$d$ type of schedulers with a possibility of blocking. Our solution approach is from a restless bandit perspective and one of the important take away messages of the study is that a well performing dispatcher can depend on the scheduling discipline deployed in the servers. 

Several possible extensions suggest themselves. The impact of abandonment and batch arrivals on the load-balancing problem can be an immediate follow up. The performance of Whittle's dispatching rule in a many-server regime, also known as mean-field regime, would be interesting, as it might yield closed-form expressions.  The stability of Whittle's index policy for load balancing problem is another interesting future avenue. Extending the results of this work to load balancing systems with multiple dispatchers, as it happens in modern data centers, is also relevant. In particular, this would require to capture the impact that one dispatcher's action has upon others. 


\section*{Acknowledgement}
This research is partially supported by the French Agence Nationale de la Recherche (ANR) through the project ANR-15-CE25-0004 (ANR JCJC
RACON) and  by ANR-11-LABX-0040-CIMI within the program 
ANR-11-IDEX-0002-02.

\bibliographystyle{plain}
\bibliography{references}

\appendix
\section{Proof of propositions and lemma}\label{appendix:proofFCFS}
We present various proofs in this section. 
\subsection{Proof of Lemma \ref{Lemma:stoch_dom}}
{Some ideas in this proof are borrowed from \cite{borkar2017whittle}.} We drop the dependency on $k$ for ease of notations. We prove this result by using the idea of stochastic dominance in Markov Chains. Consider two Markov chains, $\{X_t^1\},~\{X_t^2\}$ corresponding to threshold $n$ and $n+1$. Label the positive recurrent states in the reverse order, i.e., if the threshold is $x$, state with $(x + 1)$ jobs
is labeled as state 0, that with $x$ jobs is labeled as state 1, and this is repeated till the state with 0 jobs
is labeled $(x + 1)$. We will show that $\{X_t^2\}$ stochastically dominates $\{X_t^1\}$ for LPS-$d$ scheduling discipline.

Let the transition matrices of the Markov chains be $P_1$ and $P_2$ for $\{X_t^1\}$ and $\{X_t^2\}$ respectively. Their elements $p_k(i, j)$, $
k = 1, 2$, give the corresponding probability of going from state $i$ (i.e., $(x + 1 - i)$ jobs in the server) to state $j$ (i.e., ($x + 1 - j$) jobs in the server). Extend $P_1$ to
a $(n + 3) \times (n + 3)$ matrix by adding a column and row of zeros:{
$$ 
 \begin{pmatrix}
p_{n+1, n+1} & p_{n+1, n} &p_{n+1, n-1} & \cdots  & p_{n+1, 1}& p_{n+1, 0} &0 \\
 p_{n, n+1} & p_{n, n} & p_{n, n-1} &  \cdots & p_{n, 1} & p_{n, 0} &0 \\0 &   p_{n-1,n} &  p_{n-1,n-1} & \cdots & p_{n-1, 1} & p_{n-1, 0} &0 \\
  \vdots  & \vdots  & \vdots  & \ddots &\vdots &\vdots&\vdots \\
0 & 0 & 0  &   \cdots & p_{1,1}  &p_{1,0} &0 \\ 
0 & 0 & 0  &  \cdots & p & 1-p & 0 \\
0 & 0 & 0  &  \cdots & 0 & 0 & 0 \\
 \end{pmatrix}
$$
and the matrix $P_2$ is:
$$ 
 \begin{pmatrix}
p_{n+2, n+2} & p_{n+2, n+1} &p_{n+2, n} & \cdots  & p_{n+2, 2}& p_{n+2, 1} & p_{n+2, 0} \\
 p_{n+1, n+2} & p_{n+1, n+1} & p_{n+1, n} &  \cdots & p_{n+1, 2} & p_{n+1, 1} &p_{n+1, 0} \\
0 &   p_{n,n+1} &  p_{n,n} & \cdots & p_{n, 2} & p_{n, 1} & p_{n, 0} \\
  \vdots  & \vdots  & \vdots  & \ddots &\vdots &\vdots&\vdots \\
0 & 0 & 0  &   \cdots & p_{2,2}  & p_{2,1} & p_{2,0} \\ 
0 & 0 & 0  &  \cdots & p_{1,2}  & p_{1,1}  & p_{1,0} \\
0 & 0 & 0  &  \cdots & 0 & p & 1-p \\
 \end{pmatrix}
$$}
where each element in above matrices is according to the Bernoulli arrival Process and a departure process driven by LPS-$d$ scheduling scheme (See section \ref{LPS-d} for the precise definition of $P_{i,j}$ in above matrices). Now consider the following lower triangular matrix of dimension $(n + 2) \times (n + 2)$:
$$ 
U =  \begin{pmatrix}
1 & 0 &0 & \cdots &0 \\
  1 & 1 &0 &  \cdots &0 \\
1 &   1 &  1 &\cdots &0 \\
  \vdots  & \vdots  & \vdots   &\ddots &\vdots \\
1 &   1 &  1 &\cdots & 1\\
 \end{pmatrix}
$$
It follows from above matrices that
$$P_1U \le P_2 U.$$
This shows that $\{X_t^2\}$ stochastically dominates $\{X_t^1\}$ (see \cite[Definition 3.12]{kijima2013markov}). Further, using Theorem 3.31 in \cite[Page 158]{kijima2013markov}, we have 
$$\bar{\pi}^{n+1}(0) \le \bar{\pi}^{n}(0)$$
where $\bar{\pi}^n$ (resp., $\bar{\pi}^{n+1}$) is the stationary distribution for the threshold $n$ (resp., $n + 1$) with relabeled states. In view of our relabeling of the positive recurrent states, this
translates into the following in the original notation:
$${\pi}^{n+1}(n+2) \le {\pi}^{n}(n+1)$$
This shows that the stationary probability of the only passive state with positive stationary probability decreases as the threshold increases. However,
$$\sum_{j=0}^{n+1}\pi^n(j) =1$$
and $\pi^n(n+1)$ decreases with the threshold $n$. This shows that
$\sum\limits_{m=0}^n\pi^n(m)$ is non-decreasing in $n$. 
\subsection{Proof of Proposition \ref{prop:FCFS}}
We drop the subscript $k$ for ease of notation. 
The Whittle's index is given by Equation (\ref{general_index_value}) if it is non-increasing in $n$:
$$pD-\frac{ \sum\limits_{m=0}^{n}C(m)[\pi^{n}(m)-\pi^{n-1}(m)] + C(n+1)\pi^{n}(n+1) }{\sum\limits_{m=0}^n\pi^n(m) - \sum\limits_{m=0}^{n-1}\pi^{n-1}(m)}$$
Above index can be rewritten as: 
$$pD-\frac{ \sum\limits_{m=0}^{n-1}C(m)[\pi^{n}(m)-\pi^{n-1}(m)] + C(n)[\pi^{n}(n)-\pi^{n-1}(n)] + C(n+1)\pi^{n}(n+1) }{\sum\limits_{m=0}^n\pi^n(m) - \sum\limits_{m=0}^{n-1}\pi^{n-1}(m)}$$
The sum of stationary probability simplifies to the following by using Equations (\ref{stat_prob_0}) - (\ref{stat_prob_n}):
$$\sum_{m=0}^n\pi^n(m) = \frac{q^{n+2}(1-p)^{n} - p^{n+1}q(1-q)^n}{q^{n+2}(1-p)^n - p^{n+2}(1-q)^n}, $$
and the term, $\sum\limits_{m=0}^n\pi^n(m) - \sum\limits_{m=0}^{n-1}\pi^{n-1}(m)$,  simplifies to 
$$ \frac{(1-p)^{n-1}(1-q)^{n-1}p^nq^{n+1}(p-q)^2}{(q^{n+2}(1-p)^n - p^{n+2}(1-q)^n)(q^{n+1}(1-p)^{n-1} - p^{n+1}(1-q)^{n-1})},$$
by using the stationary distribution for FCFS discipline (derived in Section \ref{FCFS_scheme}). We have  
$$\frac{\pi^{n}(m)-\pi^{n-1}(m)}{\sum\limits_{m=0}^n\pi^n(m) - \sum\limits_{m=0}^{n-1}\pi^{n-1}(m)} = -\frac{p^2\left(\frac{b}{d}\right)^{m-1}}{(1-p)q^2}~\forall~m=1,2,\cdots,n-1,$$
$$\frac{\pi^{n}(n) - \pi^{n-1}(n)}{\sum\limits_{m=0}^n\pi^n(m) - \sum\limits_{m=0}^{n-1}\pi^{n-1}(m)} = \frac{q}{q-p}\left[ p + \left(\frac{p}{q}\right)^{n+1}\left(\frac{1-q}{1-p}\right)^{n-1} \left(\frac{p}{q}-p-1 \right)\right],$$
and
$$\frac{\pi^{n}(n+1)}{\sum\limits_{m=0}^n\pi^n(m) - \sum\limits_{m=0}^{n-1}\pi^{n-1}(m)} = \frac{p(1-q)}{q-p}\left[ 1-\left(\frac{p}{q}\right)^{n+1}\left(\frac{1-q}{1-p}\right)^{n-1}\right]$$

Using above expressions, the Whittle's index simplifies to the expression as stated in the proposition. Note that we are left to argue the non-increasing nature of the index. We argue this in the following steps:
After sum algebra we can write 
$$W_k(n+1) - W_k(n) = \frac{C_k(n)pq_k}{q_k - p}\left[1 - \left(\frac{p}{q}\right)^2\left(\frac{p(1-q_k)}{q_k(1-p)}\right)^n\right] + \frac{C_k(n+1)p(1-2q_k)}{q_k - p}\left[1 - \left(\frac{p}{q}\right)^2\left(\frac{p(1-q_k)}{q_k(1-p)}\right)^n\right]$$
$$  -\frac{C_k(n+2)p(1-q_k)}{(q_k-p)}\left[1- \left(\frac{p}{q_k}\right)^{n+2}\left(\frac{1-q_k}{1-p}\right)^{n} \right].
$$
Further simplification results in 
\begin{equation}\label{index_difference}
W_k(n+1) - W_k(n) =\left( \frac{pq_k (C_k(n)-C_k(n+1)) + p(1-q_k)(C_k(n+1)-C_k(n+2))}{q_k - p}\right) \left[1 - \left(\frac{p}{q_k}\right)^2\left(\frac{p(1-q_k)}{q_k(1-p)}\right)^n\right]. 
\end{equation}
Note that the term $$\frac{1}{q_k - p}\left[1 - \left(\frac{p}{q_k}\right)^2\left(\frac{p(1-q_k)}{q_k(1-p)}\right)^n\right] >0. $$
Thus, the sign of the term $W(n+1) - W(n)$ is decided by  $pq_k (C_k(n)-C_k(n+1)) + p(1-q_k)(C_k(n+1)-C_k(n+2))$. Note that this term is negative from the assumption that the cost function is non-decreasing. Thus, $W(n+1) - W(n) \leq 0,~\forall~n$. This completes the proof for the non-increasing nature of the index.  

We now give a proof for the second part of the proposition. It immediately follows with some simplification from the expression of the index that  
$$W_k(0) = pD + C_k(1)p\frac{q_k+p-1}{q_k},$$

We consider the following two cases to determine $\lim\limits_{n\rightarrow\infty}W_k(n)$.
\begin{description}
\item[Case $\mathbf{q_k<p}$:]{
 It follows from Equation (\ref{index_difference}) and the fact that the cost function is non-decreasing, that} 
$\lim\limits_{n\rightarrow\infty}W_k(n+1) - W_k(n) \rightarrow -\infty$. Since $W_k(n)$ is non-increasing, we then have $\lim\limits_{n\rightarrow\infty}W_k(n)\rightarrow-\infty$. 
\item[Case $\mathbf{q_k>p}$:]
From~\eqref{Index_FCFS}, we have that 
\begin{equation}\label{Index_FCFS}
\lim\limits_{n\rightarrow\infty}W_k(n)=    pD + \frac{p^2}{q_k^2(1-p)}\sum_{m=0}^{\infty}C_k(m)\left(\frac{b_k}{d_k}\right)^{m-1} - \frac{\lim\limits_{n\to \infty}C_k(n)q_k p }{(q_k-p)} - \frac{\lim\limits_{n\to \infty}C_k(n+1)(1-q_k)p }{(q_k-p)}.
\end{equation}
Since $b<d$,  $C_k(n)$ is bounded by a polynomial and $\lim\limits_{n\to \infty}C_k(n)\to \infty$, the above limit converges to $-\infty$. This completes the proof.  
\end{description}

\subsection{Proof of Equation~\eqref{linear_cost_FCFS_index}}

 Consider the linear cost, i.e., $C_k(m) = mC~\forall~m$. Using Equation (\ref{Index_FCFS}), the index term simplifies as stated below:
$$pD+ \frac{Cp^2 (1-p)}{(q_k-p)^2}\left[ 1-n\left(\frac{b}{d}\right)^{n-1} + (n-1)\left(\frac{b}{d}\right)^{n}  \right] -\frac{nCq_k}{(q_k-p)}\left[p + \left(\frac{p}{q_k}\right)^{n+1}\left(\frac{1-q_k}{1-p}\right)^{n-1}\left(\frac{p}{q_k}-p-1 \right) \right] $$
\begin{equation}\nonumber
  -\frac{(n+1)Cp(1-q_k)}{(q_k-p)}\left[1- \left(\frac{p}{q_k}\right)^{n+1}\left(\frac{1-q_k}{1-p}\right)^{n-1} \right]. 
\end{equation}
Above, further simplifies to: 
$$pD+ \frac{Cp^2 (1-p)}{(q_k-p)^2}\left[ 1-n\left(\frac{b_k}{d_k}\right)^{n-1} + (n-1)\left(\frac{b_k}{d_k}\right)^{n}  \right]$$
$$-\frac{Cp(1-q_k)}{q_k-p} - \frac{nCp}{q_k-p} + \frac{Cp(1-q_k)}{q_k-p}\left(\frac{p}{q_k}\right)^{2} \left(\frac{b_k}{d_k}\right)^{n-1} + \frac{nCq_k}{q_k-p}\left(\frac{p}{q_k}\right)^{2} \left(\frac{b_k}{d_k}\right)^{n-1}. $$
On rearranging the terms, above can be written as: 
\begin{equation}\nonumber
 pD+ \frac{Cp^2 (1-p)}{(q_k-p)^2} -\frac{Cp(1-q_k)}{q_k-p} - \frac{nCp}{q_k-p}- \frac{Cp^3(1-p)}{q_k(q_k-p)^2}\left(\frac{b_k}{d_k}\right)^{n}.
\end{equation}
It can be easily verified that above term is non-increasing in $n$ in either of the cases $q<p$ or $q>p$. Thus, the Whittle's index  
\begin{equation}\nonumber
 W_k(n) = pD+ \frac{Cp^2 (1-p)}{(q_k-p)^2} -\frac{Cp(1-q_k)}{q_k-p} - \frac{nCp}{q_k-p}- \frac{Cp^3(1-p)}{q_k(q_k-p)^2}\left(\frac{b}{d}\right)^{n}.
\end{equation}
Further, it immediately follows from above expression at $n=0$ that 
$$ W_k(0) = pD+ \frac{Cp^2 (1-p)}{(q_k-p)^2} -\frac{Cp(1-q_k)}{q_k-p}- \frac{Cp^3(1-p)}{q_k(q_k-p)^2},$$
which simplifies to $W_k(0)$ stated in the proposition. 

\subsection{$q_k^{d_k}(i|n)$ is non-decreasing function of $n$.}\label{non_decreasing_prob}
By definition,
$$ q_k^{d_k}(i|n) = {\min\{n,d_k\} \choose i} \left(\frac{q_k}{\min\{n,d_k\}}\right)^i\left(1-\frac{q_k}{\min\{n,d_k\}}\right)^{\min\{n,d_k\}-i},
$$
note that it is straight forward to see that above function is constant in $n$ for $n>d_k$. We prove below non-decreasing nature of $q_k^{d_k}(i|n)$ for $n\le d_k$. For $n \le d_k$, 
$$ q_k^{d_k}(i|n) = {n \choose i} \left(\frac{q_k}{n}\right)^i\left(1-\frac{q_k}{n}\right)^{n-i}.
$$
Above further simplifies to
$$ q_k^{d_k}(i|n) = \frac{q^x}{(i)!} \frac{n(n-1)(n-2)\dots(n-i+1)}{n^i} \left(1-\frac{q_k}{n}\right)^{n-i}.
$$
It can be easily argued that the terms $\frac{n(n-1)(n-2)\dots(n-i+1)}{n^i}$ and $\left(1-\frac{q_k}{n}\right)^{n-i}$ are positive and increasing in $n$. Hence, $q_k^{d_k}(i|n)$ is increasing in $n$.

\end{document}